\tikzset{
  treenode/.style = {align=center, inner sep=0pt, text centered,
    font=\sffamily},
  arn_n/.style = {treenode, circle, white, font=\sffamily\bfseries, draw=black,
    fill=black, text width=0.5em},
    arn_n1/.style = {treenode, circle, black, font=\sffamily\bfseries, draw=black,
    fill=white, text width=1em},
  arn_r/.style = {treenode, circle, red, draw=red, 
    text width=1.5em, very thick},
  arn_x/.style = {treenode, rectangle, draw=black,
    minimum width=0.1em, minimum height=0.1em}
}
\def\@biblabel#1{} \makeatother
\newtheorem{definition}{Definition}[section]
\newtheorem{proposition}[definition]{Proposition}
\newtheorem{fact}[definition]{Fact}
\newtheorem{remark}{Remark}
\newtheorem{example}{Example}
\newcommand{\ind}[1]{1\!\!1_{#1}}
\def\a {\alpha}
\def\o {\omega}
\def\O {\Omega}
\def \veps {\varepsilon}
\def \vphi {\varphi}
\def\CL { \mathcal{L}}
\def\CF {\mathcal{F}}
\def\CM {\mathcal{M}}
\def\CL {\mathcal{L}}
\def\CB {\mathcal{B}}
\def\CS {\mathcal{S}}
\def\CR {\mathcal{R}}
\def\CT { \mathcal{T}}
\def\CU {\mathcal{U}}
\def\CR {\mathcal{R}}
\def\E {\mathbb{E}}
\def\P {\mathbb{P}}
\def\R {\mathbb{R}}
\def\I {\mathbb{I}}
\def\tl {\tilde{l}}
\def\tr {\tilde{r}}
\def\ti {\tilde{i}}
\def\tnu {\tilde{\nu}}
\def\tbe {\tilde{\beta}}
\def\ti {\tilde{i}}
\def\tX {\tilde{X}}
\def\tnu {\tilde{\nu}}
\def\tbe {\tilde{\beta}}
\def\lI {l_{{\rm I}}}
\def\rI {r_{{\rm I}}}
\def\lII {l_{{\rm II}}}
\def\rII {r_{{\rm II}}}
\def\tlI {\tilde{l_{{\rm I}}}}
\def\trI {\tilde{r_{{\rm I}}}}
\def\tlII {\tilde{l_{{\rm II}}}}
\def\trII {\tilde{r_{{\rm II}}}}
\def\AI {A_{{\rm I}}}
\def\BI {B_{{\rm I}}}
\def\AII {A_{{\rm II}}}
\def\BII {B_{{\rm II}}}
\begin{document}
\begin{frontmatter}
\title{Inequality and risk aversion in economies open to altruistic attitudes}


\begin{aug}
\author{\fnms{Eleonora} \snm{Perversi}\thanksref{t1}\ead[label=e1]{eleonora.perversi@unipv.it}}
\and
\author{\fnms{Eugenio} \snm{Regazzini}\thanksref{t2}\ead[label=e2]{eugenio.regazzini@unipv.it}}

\thankstext{t1}{Work partially supported by MIUR-2012AZS52J-003 
}
\thankstext{t2}{Also affiliated with CNR-IMATI, Milano, Italy. Work partially supported by MIUR-2012AZS52J-003.
}
\runauthor{E. Perversi and E. Regazzini}

\affiliation{Universit\`a degli Studi di Pavia}

\address{Universit\`a degli Studi di Pavia, Dipartimento di Matematica, via Ferrata 1
27100 Pavia, Italy\\
\printead{e1}\\
\phantom{E-mail:\ }\printead*{e2}}

\end{aug}

\begin{abstract}
This paper attempts to find a relationship between agents' risk aversion and inequality of incomes. Specifically, a model is proposed for the evolution in time of surplus/deficit distribution, and the long-time distributions are characterized almost completely. They turn out to be weak Pareto laws with exponent linked to the relative risk aversion index which, in turn, is supposed to be the same for every agent. On the one hand, the aforesaid link is expressed by an affine transformation. On the other hand, the level of the relative risk aversion index results from a frequency distribution of observable quantities stemming from how agents interact in an economic sense. Combination of these facts is conducive to the specification of qualitative and quantitative characteristics of actions fit for the control of income concentration.
\end{abstract}

\begin{keyword}[class=JEL]
\kwd[JEL ]{D310}
\end{keyword}


\begin{keyword}
\kwd{Gini coefficient} 
\kwd{economic inequality} 
\kwd{income distribution}
\kwd{(relative) risk aversion index}
\kwd{weak Pareto laws} 
\end{keyword}

\end{frontmatter}

\section{Introduction, hypotheses and main results}\label{sec:intro}
In recent years the debate on inequality has been vivified by the appearance of writings that have attracted, and continue to attract, the attention of the general public. Suffice it to mention books by Atkinson (2015), Piketty (2013), Stiglitz (2013), along with the echo these works have found, and continue to find, in the public opinion. See also OECD (2015). The clear message coming from them is that inequality, far from being a necessary condition for good economies, may represent a serious bar to a smooth running of the economics of any community, to say nothing of a well-known experimental evidence of a propensity to inequality-aversion along with a certain kind of social preferences including egalitarianism. The present paper aims at answering the following two questions, within a clearly defined framework: First, has an inequality-averse community to resign itself to coexisting with low-level performance of its economic system or, on the contrary, is inequality aversion conducive to incentives suitable for reducing risk-aversion and, as a consequence, to improve performance of an economic system? Second, in this last case, how might feasible political actions influence individual risk-aversion? Answering these questions is not a mere trifle, so that we here confine ourselves to considering inequality (concentration) in the distribution of individual \textit{s-plus}, i.e., in the distribution of the difference between total production of wealth and subsistence, where subsistence is meant as wealth necessary to keep producers alive and cover the long-term cost of production. The s-plus corresponds to a \textit{surplus} or a \textit{subplus} (=\textit{deficit}) depending on the sign of the aforesaid difference. Thus, the present paper deals neither with the distribution of income among production factors, nor with the consumption of basic necessities. It will be assumed that the formation of the distribution of s-plus is determined by a large number of contracts between agents. Any of such contracts will be characterized by two vectors of (observable) non-negative ratios (see \eqref{interazioni} in Subsection \ref{sec:SUBassumptions}) to incorporate the idea that, in general, the gain (both positive or negative) of each agent is the sum of an s-plus derived from her/his initial s-plus, and of an s-plus she/he might extract from the initial s-plus of the other party of the contract. It is worth warning the reader that ``to extract'' has here to be meant in such a way that an agent might, directly of vicariously, subsidize the other party if this has an initial deficit, in agreement with the supposition we are dealing with a community of agents exhibiting social preferences including altruism. As to the notion of contract adopted in the present work, it is worth stressing that it is quite ``irregular'' with respect to the almost entire impressive economic literature on the subject. In point of fact, in modelling contracts, economists,  although they manifest a variety of discordant opinions, agree in assuming that all agents are pure selfish persons, whose aspiration consists in maximizing a utility function, possibly under more or less stringent constraints. Economists indicate conditions under which equilibria can be reached, supposing that agents are all in a position to tend towards such equilibria along ways without obstacles. Then, it is no wonder if the reaction to every proposal of such a nature is that such-and-such an assumption is implausible. This is often considered as a seemingly good reason to restart with some variant of the story which is, however, doomed to the same fate as its predecessor, with very high probability. A few valuable hints on how to get out of this vicious circle can be found, for example,
in the work by Rodolfo Benini (1862-1956), an Italian economist, demographer and statistician inclined to understand the fundamentals of the socialist theory of surplus, and elaborate them according to his personal bent towards an original inductive approach to economics. An important aspect of his oeuvre is the analysis of the fundamental role played by the \textit{resistance capacity} of the parties to a contract in determining the splitting up of the benefits of the ensuing exchange between the agents. To put it in a nutshell, the richer has wider freedom in choosing between present and future goods or services, between goods or services from the same market or from different markets, etc.. Such a person is in general in a position to wait the ``capitulation'' of the competitor, and the wait will be, in general, profitable for him. On the contrary, who possesses scant resources cannot wait. The value she/he attaches to what she/he can offer (she/he needs, respectively) decreases (increases, respectively) as the competitor delay in accepting the offer gets longer (as her/his resistance to the lack of the requirements gets weaker, respectively). See Benini (1929, 1938). This is enough to explain the leading role of the initial positions of agents in determining the effects of a contract, and the consequent description of these effects adopted in the present work, that is, in terms of mere accounting of the aforesaid distribution of advantages as functions of the initial positions. See next formula \eqref{interazioni}. On the basis of this naif, but realistic, stance, combined with a few hypotheses concerning both the social evaluation of the outcomes of any contract and the dynamics of the s-plus determined by the contracts between agents (Subsection \ref{sec:SUBassumptions}), we set about answering (Subsection \ref{sec:main} and Section \ref{sec:2}) the questions formulated at the beginning of these preliminary considerations.

\subsection{Discussion of the main assumptions}\label{sec:assumptions}
For a better understanding of the results of the present paper a reasoned description of the leading assumptions is now made. This requires a preliminary explanation of two concepts continuously surfacing in the rest of the paper: First, the concentration of a transferable (statistical) character, that is, in other terms, the inequality of its distribution among statistical units. Second, agents risk aversion. These notions are here conceived from an operational point of view and, then, they are recalled by means of well-defined measuring procedures in order to verify practically any statement where these notions are mentioned. See Bridgman (1927).

\subsubsection{Concentration}\label{sec:subsecConc}
The phrase ``concentration of a transferable character'' is used by statisticians with the same meaning as ``inequality in the distribution of that character''. Thus, a distribution is said to be less concentrated than another one when it is more egalitarian than the latter. These vague propositions can be made meaningful by resorting to the \textit{Lorenz concentration function} extended to arbitrary probability distributions (p.d.'s, for short) on $\R$, even with negative values in their support. For the classical usual definition of such a function, see, for example, Paragraph 2.25 in Stuart and Ord (1994). As for the aforesaid extension, let $\CF_0$ denote the class of p.d. functions on $\R$ with finite expectation, deprived of the one corresponding to the unit mass at $0$, which, besides, would be of no interest from the point of view of concentration. Then, for every $F$ in $\CF_0$, define $A_F$ to be the p.d. function of the absolute value of any random number distributed according to $F$, i.e.
\[
A_F(x)=\Big(F(x)-F(-x-0)\Big)\ind{[0,+\infty)}(x)\qquad(x\in\R).
\]
Moreover, set $A_F^{-1}(t)=\inf\{x\in\R:\;A_F(x)> t\}$ for any $t$ in $(0,1)$. Writing $\CR(A_F)$ for the range of $A_F$ and $\CM(A_F)$ for the expectation $\int_0^{+\infty}x d A_F(x)$, the \textit{extension} to $[0,1]$, by linear interpolation, of the function 
\[
\CR(A_F)\ni\theta\mapsto\dfrac{1}{\CM(A_F)}\int_0^\theta A_F^{-1}(t)dt
\]
is just the classical Lorenz concentration function of $A_F$. In the present work, it is used as concentration function for $F$ in $\CF_0$, denoted by $\vphi_F$. In fact, for any pair of elements $F_1$, $F_2$ of $\CF_0$ such that $\vphi_{F_1}\leq \vphi_{F_2}$, with strict inequality for some point of $(0,1)$, $A_{F_2}$ can be obtained from $A_{F_1}$  by redistributing absolute value of s-plus from higher absolute value to lower one. With reference to the original p.d. function, this is tantamount to saying that $F_2$ can be obtained from $F_1$ by a policy of equalizing transfers. Hence, consistently with the \textit{Dalton principle of transfers}, $F_2$ can be viewed as \textit{less concentrated than} $F_1$. See Dalton (1920) and, for further developments, Pratt (1964), Atkinson (1970), Rothschild and Stiglitz (1970, 1971 and 1972), Cifarelli and Regazzini (1987), Regazzini (1992).

\subsubsection{Relative risk aversion index}\label{sec:subsecRRAI}
The notion of risk aversion has been derived $-$ by Bruno de Finetti (1906-1985) $-$ from the classical expected utility principle, which dates back to Daniel Bernoulli and agrees with the axioms proposed by von Neumann and Morgensten (1944). Thus, it is assumed that each individual has a \textit{utility function} of money (a strictly increasing and continuous real-valued function on $\R$) $\bar{u}$ so that with any random gain $G$, having p.d. function $F$, it is associated a non-random gain $\hat{g}$, equivalent to $G$, in the sense that
\[
\int_{\R}\bar{u}(x)dF(x)=u(\hat{g}).
\]
Consider now the problem of the certainty equivalent gain for the restriction $F_0$ of $F$ to a small neighbourhood $N_0:= (x_0-\veps, x_0+\veps]$ of any arbitrary point $x_0$ in the support of $F$:
\[
F_0(x):=\left\{
		\begin{aligned}
		& 0 & \text{ if } & x <x_0-\veps\\
		& \dfrac{F(x)-F(x_0-\veps)}{F(x_0+\veps)-F(x_0-\veps)} & \text{ if } & x_0-\veps \leq x < x_0\\
		& 1 & \text{ if } & x \geq x_0+\veps
		\end{aligned}
		\right.
\]
Then, the certainty equivalent gain $\hat{g}_0$ of a random gain distributed according to $F_0$ has to satisfy
\[
\int_{N_0}\bar{u}(x)dF_0(x)=\bar{u}(\hat{g}_0).
\]
By an elementary estimation of the first member, which becomes better and better as $\veps$ decreases, one has
\[
\bar{u}(\hat{g}_0)+ \int_{N_0}(x-\hat{g}_0)\bar{u}'(\hat{g}_0)dF_0(x)+\dfrac{1}{2}\int_{N_0}(x-\hat{g}_0)^2\bar{u}''(\hat{g}_0)dF_0(x)\simeq \bar{u}(\hat{g}_0) 
\]
or, equivalently, 
\[
\dfrac{\int_{N_0}(x-\hat{g}_0)dF_0(x)}{\int_{N_0}(x-\hat{g}_0)^2dF_0(x)} \simeq -\dfrac{\bar{u}''(\hat{g}_0)}{2\bar{u}'(\hat{g}_0)}  \simeq -\dfrac{\bar{u}''(x_0)}{2\bar{u}'(x_0)} =: I_{\bar{u}}(x_0).
\]
It is easy to interpret the right hand side as a local \textit{risk aversion index}. Indeed, one can rewrite the above equation as
\[
\dfrac{m_0-\hat{g}_0}{\sigma_0^2+ (m_0-\hat{g}_0)^2}= \dfrac{\lambda_{\bar{u}}(x_0)}{x_0}
\]
where $m_0:= \int_{N_0}x dF_0(x)$, $\sigma_0^2:=\int_{N_0}(x-m_0)^2 dF_0(x)$ and 
\begin{equation}\label{lambdarai}
\lambda_{\bar{u}}(x_0):= -x_0 \dfrac{\bar{u}''(x_0)}{2\bar{u}'(x_0)}.
\end{equation}
Hence, one gets
\[
m_0-\hat{g}_0=\dfrac{x_0 \pm \sqrt{x_0^2-4 \sigma_0^2\lambda^2_{\bar{u}}(x_0)}}{2\lambda_{\bar{u}}(x_0)}.
\]
For the most commonly adopted utility functions $\bar{u}$, $\lambda_{\bar{u}}(x_0)$ turns out to be bounded from above by a suitable constant. Combination of this fact with the obvious inequality $\sigma_0^2\leq 4\veps^2$, yields $\sqrt{x_0^2-4 \sigma_0^2\lambda^2_{\bar{u}}(x_0)}\simeq |x_0|$ for sufficiently small $\veps$. Then, since $|m_0-\hat{g}_0|\leq 2\veps$, there is only one admissible value for $m_0-\hat{g}_0$, that is
\[
m_0-\hat{g}_0=\dfrac{x_0 -sign(x_0) \sqrt{x_0^2-4 \sigma_0^2\lambda^2_{\bar{u}}(x_0)}}{2\lambda_{\bar{u}}(x_0)}
\]
which gives
\[
\hat{g}_0=m_0-\dfrac{x_0 -sign(x_0) \sqrt{x_0^2-4 \sigma_0^2\lambda^2_{\bar{u}}(x_0)}}{2\lambda_{\bar{u}}(x_0)}.
\]
A straightforward computation implies that $\dfrac{\partial \hat{g}_0}{\partial \lambda_{\bar{u}}(x_0)}$ is non-positive [non- negative, respectively] whenever $x_0$ is positive [negative, respectively]. Finally, recalling that $\lambda_{\bar{u}}(x_0)= x_0 I_{\bar{u}}(x_0)$, the previous argument shows that the certainty equivalent $\hat{g}_0$ decreases as $I_{\bar{u}}(x_0)$ increases, justifying the interpretation of $I_{\bar{u}}(x_0)$ as a measure of risk aversion. Thus, positive (negative, respectively) values for $I_{\bar{u}}(x_0)$ indicate that agents who own an s-plus $x_0$ are risk averters (risk lovers, respectively). The idea of measuring risk aversion through $I_{\bar{u}}(x_0)$ goes back to de Finetti (1952), as mentioned at the beginning of this subsection, and explained, e.g., in the paper by Regazzini and Spizzichino (2011) that we are using here for the presentation of the subject. Since $\lambda_{\bar{u}}(x_0)$ is dimensionless, while $I_{\bar{u}}(x_0)$ has dimension (money)${}^{-1}$, it is sometimes useful, and it will be done in the present paper, to adopt the former as an index for risk aversion, rather than the latter, and to name $\lambda_{\bar{u}}(x_0)$ \textit{relative risk aversion index} (relative r.a.i., for short). See de Finetti (1952), Pratt (1964) and Arrow (1965).
\newline
\newline

\subsubsection{Description of the assumptions}\label{sec:SUBassumptions}
As for the description of the main assumptions, one starts by supposing there is a distinguished social welfare function derived, for example, from a common consensus, or from an authority democratically elected. Then, the aforesaid assumptions can be partitioned into three groups: The first are relative to the exchange of s-plus. The second regard the formation of the welfare function. The third concern the dynamics of the joint distribution of s-plus in a population of $N$ agents.

As far as the first group is concerned, exchanges between two agents are described through the quantification of the variation occurring in each agent's s-plus because of the exchange, assuming that one can split this variation into two components: the former resulting from the investment of the s-plus initially owned by her/him, the latter thought of as a function of the initial s-plus of the counterpart. Moreover, one supposes that each agent is: first, cautious enough to exclude contracts in which the absolute value of a possible negative variation due to the investment of an initial surplus would be greater than the aforesaid surplus, second, not so able to transform an initial subplus into a profit without the intervention of a cooperative counterpart, third, sufficiently altruist to possibly subsidize a counterpart having an initial subplus. This is tantamount to stating that each of the two components of any final s-plus is either zero or has the same sign of the corresponding generating initial s-plus. For the sake of generality, it will not be imposed that the advantage (disadvantage) of one of the two contracting parties has to correspond to a disadvantage (advantage) of the same amount for the other one. Translating these considerations into symbols, one denotes the initial s-plus of the two contracting parties, say I and II, by $v$ and $w$, respectively, and the corresponding final s-plus by $v'$ and $w'$. Then, $v'=\AI+\BI$, $w'=\AII+\BII$, where: $\AI$ ($\AII$, respectively) is the s-plus I (II, respectively) extracts from her/his own initial s-plus $v$, and $\BI$ ($\BII$, respectively) is the s-plus I (II, respectively) extracts from II (I, respectively). Now, in order to express a sort of dependence of the $A$'s and $B$'s on the initial s-plus, according to the aforesaid initial aims, one can resort to the representation $\AI=\lI v$, $\AII=\lII w$, $\BI=\rI w$, $\BII=\rII v$,  by introducing suitable coefficients $l$'s and $r$'s in an obvious manner, and get
\begin{equation}\label{interazioni}
	\begin{aligned}
	&v'= \lI v+ \rI w\\
	&w'= \rII v+ \lII w.
	\end{aligned}
\end{equation}
The coefficients $l,r$'s must be non-negative in view of the assumptions made above. In any population of contracting agents, at a given time, there will be a variety of pairs $(l,r)$. It is useful to organize all the observed pairs $(l,r)$ in the form of a statistical distribution  (of frequencies) $\tau$ to be used, if necessary, for evaluating the probability of observing a pair $(l,r)$ satisfying any given condition of interest. Thus, $\tau$ can be seen as a probability law assessed by resorting to a \textit{feasible} statistical survey. As a matter of fact, in the model we are about to propose, with reference to any observable, but not yet observed, contract, each encounter between agents will be characterized by means of the non-negative pairs $(\lI,\rI)$, $(\lII,\rII)$, thought of as realizations of two random vectors $(\tlI,\trI)$, $(\tlII,\trII)$, with $\tau$ as common p.d.. A realistic way of thinking of $\tau$ is that it is presentable as a mixture of frequency distributions, all of which is associated with a specific type of interactions, weighted with the frequency of each of these types within the population under study. It is worth recalling that \eqref{interazioni} is reminiscent of kinetic modelling for wealth distribution, like, for example, in Dr\v{a}gulescu and Yakovenko (2000), in Chapter 5 of Pareschi and Toscani (2014), and in Ajmone Marsan et al. (2016) for modelling of the same type with applications to a more extensive field of phenomena. For more specific literature, see: Angle (1986, 2006) for the connection between the surplus theory and the inequality process; Chakraborti and Chakrabarti (2000) where the concept of ``saving propensity'' is introduced; Cordier et al. (2005) for a model of economy involving both exchanges between agents and speculative trading. 

In comparison with this literature, the novelty of the present work lies in the proof of a strict relationship between attitude towards risk and inequality in the distribution of individual s-plus. Such a relationship, as explained in Section \ref{sec:eco}, enlightens plans of actions in order to reduce inequality, simultaneously with a tendency for an economic system to high-level performances.

The following are a few illustrative examples of the use of scheme \eqref{interazioni}.

\begin{example}\label{ex1}
{\rm{\small This example explains how to apply the general scheme \eqref{interazioni} to the case of public services. Let I indicate the user of a given public service and II its supplier. Here one assumes that II receives from the State an endowment lower than the real cost ($w<0$), so that one can think of the fare to be paid by I as sum of two amounts: the former, say $f_1>0$, proportioned to $w$, i.e. $f_1=f_1(w)>0$; the latter, denoted by $f_2=f_2(v)$ could take also negative values when $v$ is negative and the policy of II is that of favouring users under the poverty line. Whence, I's s-plus passes from $v$ to $v'=v-f_1(w)-f_2(v)$, that can be written according to scheme \eqref{interazioni} with $\AI=v-f_2(v)$ and $\BI=-f_1(w)$ provided that $v-f_2(v)$ has the same sign of $v$, for every $v$. On the other side, II's s-plus $w$ passes to $w'=\AII+\BII$ with $\AII=w+f_1(w)$ and $\BII=f_2(v)+g_1(v)\ind{\{v<0\}}$, where $w+f_1(w)<0$ whenever $w<0$ and $g_1(v)\geq0$ for every $v<0$ stands for an extra endowment II receives for each user under the poverty line, whenever II puts the aforesaid policy in practice. Representation \eqref{interazioni} is obtained by putting $\lI= (v-f_2(v))/v$, $\lII= (w+f_1(w))/w$, $\rI=-f_1(w)/ w$, $\rII= (f_2(v)+g_1(v)\ind{\{v<0\}})/v$.
}}
\end{example}

\begin{example}\label{ex3}
{\rm{\small This example is concerned with a market of consumer goods, in which a buyer I with an initial surplus $v$ enters into a contract with a seller II. The price of a good can be thought of as the sum of two amounts: a base price $q$ (understood as the minimum of the prices II considers as admissible) and an additional quantity $f_1$ depending both on the ability of II to negotiate and on I's initial surplus. In order to schematize this distinction according to \eqref{interazioni}, one considers $f_1$ as dependent only on $v$. Hence, $v$ passes to $v'=v-(q+f_1(v))$ that is $\BI=0$ and $\AI=v-f_1(v)-q$, where the RHS is assumed to have the same sign of $v$. As for the seller, her/his s-plus $w$ changes into $w'=w+f_1(v)+f_2(w)$ where $f_2(w)$ stands for the gain II realizes from the difference between $q$ and the cost she/he has to pay to supply the good. Thus, $\BII=f_1(v)$ and $\AII=w+f_2(w)$, where the RHS is assumed to have the same sign of $w$.
}}
\end{example}

\begin{example}\label{ex2}
{\rm{\small In this example, let I be the holder of bonds which give her/him the right to receive periodic interests, for $n$ periods, at a fixed interest rate. If $V$ stands for the present value of the periodic interests, $v$ passes to $v'=\AI+\BI$ where $\AI=v+V$ and $\BI$ might be strictly positive only for special types of bonds, which, for instance, might pay a premium related to the surplus $w$ of the issuer II: $\BI=\rI w$. On the other side, $w$ passes to $w'=w-V_1$ where $V_1$ is given by the sum of $V$ and the possible premium, provided that $\AII=w-V_1=\lII w$ has the same sign of $w$.
}}
\end{example}

\vskip 0.5cm
The second group of hypotheses concern the welfare function involved throughout the rest of the paper. 
In econophysical literature, an explicit use of utility functions is made, for example, in Toscani et al. (2013). First of all, one assumes that \textit{the joint welfare function of two agents}, I and II as usual, \textit{is strongly additive} in every contract and at every time. Whence, the joint welfare function in a contract is the sum of the individual utility functions of each of the two agents. Moreover, here it is assumed that \textit{the individual utility function}, say $\CU$, \textit{is the same for every agent}. Since in every contract made in accordance with \eqref{interazioni} each agent's s-plus changes into the sum of two gains (given by $(A_i,B_i)$ for agent $i$, $i={\rm I,II}$) drawn from two different sources, respectively $-$ both expressed in the same money of account $-$ the function $\CU$ is thought of as a real-valued function on $\R^2$. Here, the above different sources are viewed as \textit{independent}, that is, according to a definition due to de Finetti (1952a, 1952b), the two averages
\[
\dfrac{1}{2}\Big(\CU(A,B)+\CU(A+a,B+b)\Big)\quad\text{and}\quad\dfrac{1}{2}\Big(\CU(A,B+b)+\CU(A+a,B)\Big)
\]
are supposed to be equal for every $(A,B)$ and $(a,b)$. This, in its turn, implies that there are functions $u_1$, $u_2$ such that 
\[
\CU(A,B)=u_1(A)+u_2(B).
\]

At this stage, by a further assumption, one supposes that the functions $u_1$ and $u_2$ are identically equal to a distinguished utility function of money $\bar{u}$, i.e. $u_1\equiv u_2 \equiv \bar{u}$. Assuming that \textit{the relative r.a.i.} (recall Subsection \ref{sec:subsecRRAI} and, in particular, \eqref{lambdarai}) \textit{is equal to a constant $\lambda< 1/2$  for every agent}, one gets 
\[
\bar{u}(x)=|x|^{1-2\lambda}{\rm sign}(x)\qquad(x\in\R)
\]
provided that $\bar{u}(0)=0$ and $\bar{u}(1)=1$. This last proviso is not restrictive, since affine transformations of any utility function of money do not alter conclusions deducible from the use of the corresponding utility index. At this stage, the collective (positive or negative) contribution of each contract is measured by the expectation of the increment $\Delta$ of the above-defined joint welfare function, that is
\[
\begin{split}
\E(\Delta)&= \int_{[0,+\infty)^4}\Big[\Big((\CU(\lI v,\rI w)-\CU(v,0)\Big)\\
&\qquad\qquad\qquad+\Big(\CU(\lII w,\rII v)-\CU(w,0)\Big)\Big]\CT(d\lI d\rI d\lII d\rII)
\end{split}
\]
where $\CT$ is a specific joint p.d. for $(\tilde{\lI},\tilde{\rI},\tilde{\lII},\tilde{\rII})$, consistent with the assumption that $(\tilde{\lI},\tilde{\rI})$ and $(\tilde{\lII},\tilde{\rII})$ are identically distributed with common p.d. $\tau$. Hence, a straightforward computation yields 
\[
\E(\Delta)=\Big(|v|^{1-2\lambda}{\rm sign}(v)+|w|^{1-2\lambda}{\rm sign}(w)\Big)\int_{[0,+\infty)^2}(l^{1-2\lambda}+r^{1-2\lambda}-1)\tau(dldr). 
\]

Now, to complete the picture it is worth lingering over the classes of the admissible p.d.'s $\tau$ and of the admissible values for the relative r.a.i. $\lambda$, respectively. With a view to this subject, it is useful to notice that the function
\[
[0,+\infty)\ni p\mapsto \CS(p):=\int_{[0,+\infty)^2}(l^{p}+r^{p})\tau(dldr)-1\qquad\text{(read $0^0:=1$)}
\]
is convex. Without real loss of generality, one assumes that $\tau$ is \textit{non-degenerate}, so that $\CS$ turns out to be \textit{strictly} convex. This leads to distinguish two cases, i.e., $\CS(p)\geq 0$ for every $p$, or $\CS(p)<0$ for some $p$. In the former, the expectation of the increment of the joint welfare function, that is
\[
\Big(|v|^{1-2\lambda}{\rm sign}(v)+|w|^{1-2\lambda}{\rm sign}(w)\Big)\CS(1-2\lambda),
\]
is positive or negative depending on whether the initial joint utility is positive or negative, but \textit{independently} of the variation of $(1-2\lambda)$. In the latter, $\CS(p)$ admits one or two distinct zeros, and any sufficiently small deviation of $(1-2\lambda)$ from each of them causes an increase or a decrease in the initial joint utility depending on the direction of the deviation itself. Since the former is tantamount to admitting there is no feasible action to avoid rich getting richer and poor getting poorer, it is sensible to assume that an economy open to altruism refuses as inadmissible every $\tau$ leading to the first of the two cases just described. As an example, look at the situation illustrated in Example \ref{ex1}, where even the encounter between two agents with subplus might produce positive increments of their joint utility. Whence, from now on, $\tau$ will be any p.d. allowing $\CS(p)$ to be strictly negative for some $p$, and $(1-2\lambda)$ to be a zero of $\CS$.
In the presence of two distinct zeros for $\CS$, there is  a further reason to assume that the common agents relative r.a.i. is just the one corresponding to the smallest of them, since adopting the relative r.a.i. connected with the greatest zero would imply that the unit mass\footnote{Recall that a p.m. $m$ is said to be the unit (or point) mass at some $x_0$ if $m\{x_0\}=1$. The unit mass at any $x_0$ is denoted by $\delta_{x_0}$ throughout the rest of the paper.} at zero is the \textit{only} admissible long-time p.d., and suffice it to notice that such a unit mass would correspond to the unrealistic perfectly egalitarian distribution for s-plus. The technical explanation of this statement is deferred to Subsection \ref{sec:comments}, but a clue as to this fact can be drawn from a simple comparison of the situations corresponding to the two different zeros. It is interesting to notice that, passing from the smallest zero to the greatest one is accompanied by a decrease in the relative r.a.i., which amounts to a decrease (increase, respectively) in the risk aversion of surplus (subplus, respectively) holders. For this last fact, recall the relationship $\lambda_{\bar{u}}(x)= x I_{\bar{u}}(x)$ in Subsection \ref{sec:subsecRRAI}. In view of this equality, it is worth noticing that, on the one hand, (positive) values of $\lambda$ close to $1/2$ are associated with a strong risk aversion (strong risk propensity, respectively) for those agents who own a surplus (subplus, respectively). On the other hand, for negative $\lambda$'s whose absolute value becomes bigger and bigger, holders of surplus (subplus, respectively) are more and more risk lovers (risk averters, respectively).\newline
\newline

As already announced, the third and last group of hypotheses concern the dynamics $-$ from a probabilistic viewpoint $-$ of the s-plus of the $N$ agents constituting the economy of interest. Assume that the $N$ s-plus form, as time evolves, a pure jump Markov process with state space $\R^N$, or, more precisely, a Markov process with paths in $D([0,+\infty);\R^N)$, the Skorokhod space of cadlag functions taking values in $\R^N$. Moreover, suppose that the \textit{initial} joint p.d. of the $N$ s-plus makes them independent and identically distributed (i.i.d., for short) and that the generator is defined on the bounded and continuous functions $\vphi\colon\R^N\to\R$ by
\[
\begin{split}
\vphi\mapsto &\dfrac{1}{N-1}\sum_{1\leq i\neq j\leq N}\dfrac{1}{2}\int_{\R^2}\Big[\vphi(v_1,\dots,v_i+h,\dots,v_j+k,v_N)\\
&\qquad\qquad\qquad\qquad\qquad\qquad-\vphi(v_1,\dots,v_N)\Big]\tilde{\eta}(dh dk; v_i,v_j)
\end{split}
\]
where $\tilde{\eta}$ is the measure on $\R^2$ defined by
\[
\begin{split}
\tilde{\eta}(A\times B;v_i,v_j)&=\CT\Big(\{(l_i,r_i,l_j,r_j)\in\R^4:\; (l_i-1)v_i+r_i v_j\in A,\\
&\qquad\qquad\qquad\qquad\qquad\qquad\qquad(l_j-1)v_j+r_j v_i\in B\}\Big)
\end{split}
\]
for every Borel subsets $A$ and $B$ of $\R$. Roughly speaking, this generator could be derived from the following hypotheses: (i) agents interact two by two, (ii) the evolution of every s-plus is driven by the same probability law, (iii) transition probabilities depend on the form \eqref{interazioni} of the contract determining the jump. The conjunction of all these elements produces the joint p.d. $\CL_{N,t}$ of the $N$ s-plus at each time $t>0$ and, in particular, the marginal distribution of each s-plus, which is the law of actual interest to the present study. Here, this marginal p.d. is investigated by assuming that the number $N$ of agents goes to $+\infty$. In view of Theorem 3.1 in Graham and Méléard (1997), one has: \textit{For every $t$, all the one-dimensional marginal laws of $\CL_{N,t}$ are identical, say $\CL^{(1)}_{N,t}$, and there exists a probability measure {\rm(}p.m., for short{\rm)} $\mu_t$ on $\R$ such that 
\[
\sup\Big\{\Big|\CL^{(1)}_{N,t}(A)-\mu_t(A)\Big|:\;\text{$A$ is any Borel subset of $\R$}\Big\}\leq 6\dfrac{e^T-1}{N-1}
\]
for fixed $T>0$ and $t< T$}. As a consequence of this, $\mu_t$ can be viewed as s-plus p.d. of any randomly chosen agent, at time $t$, in an economy of infinitely many agents ($N\to+\infty$). With a view to the study of this p.d., it is important to recall that it turns out to be the unique solution to a distinguished Cauchy problem, for $t$ in $[0,T)$.  Assuming that the number of agents $N$ is a function of $T$, say $N=N(T)$, increasing to $+\infty$ as $T\to+\infty$ (reminiscent of the fact that time can be measured in encounters between agents) in such a way that $e^T/N(T)\to 0$, the aforesaid Cauchy problem can be extended to every $t>0$, yielding
\begin{itemize}
\item[] \textit{The limiting {\rm(}as $N\to+\infty${\rm)} s-plus law $\mu_t$ satisfies the Cauchy problem}
\begin{equation}\label{C}
\left\{ \begin{aligned}
		&\dfrac{\partial}{\partial t}\int_{\R}\psi(v)\mu_t(dv)=\int_{\R}\psi(v)Q^+(\mu_t)(dv)-\int_{\R}\psi(v)\mu_t(dv)\\
		& \qquad\qquad\qquad\qquad\qquad\qquad\qquad\qquad\qquad(t\geq 0,\;\psi\in C_b(\R;\R))\\
		& \mu_{0+}=\mu_0
		\end{aligned}
\right.
\end{equation}
\textit{where $\mu_0$ is the initial p.d. and $Q^+(\mu_t)$ the p.m. satisfying
\[
\int_{\R}\psi(v)Q^+(\mu_t)(dv):=\int_{\R^2}\int_{\R^2}\psi(lv+rw)\mu_t(dv)\mu_t(dw)\tau(dl dr)
\]
for every $\psi$ in the class $C_b(\R;\R)$ of the bounded and continuous functions from $\R$ into $\R$.}
\end{itemize}

This equation has been studied extensively in Bassetti et al. (2011) followed by Bassetti and Ladelli (2012), Bassetti and Perversi (2013) and Perversi and Regazzini (2015), especially w.r.t. the asymptotic behaviour of $\mu_t$, as $t$ goes to infinity.

At this stage it is worth condensing the whole previous reasoning into the following formal hypotheses:

\begin{itemize}
\item[($H_1$)] \textit{For any couple of agents, say I and II, the increment of the joint welfare function due to the interaction is}
\[ 
|v|^{1-2\lambda}\Big(\lI^{1-2\lambda}+\rII^{1-2\lambda}-1\Big){\rm sign}(v)+|w|^{1-2\lambda}\Big(\lII^{1-2\lambda}+\rI^{1-2\lambda}-1\Big){\rm sign}(w)
\]
\textit{for every initial s-plus $v$ and $w$ of I and II, respectively.}
\end{itemize}

\begin{itemize}
\item[($H_2$)] \textit{$\tau$ is any p.d. such that $\CS$ changes sign in $(0,+\infty)$, and the relative r.a.i. $\lambda$ is the number for which $1-2\lambda$ coincides with the smallest root of $\CS(p)=0$.}
\end{itemize}

\begin{itemize}
\item[($H_3$)] \textit{In the rest of the paper, one considers the solution $\mu_t$ of \eqref{C} as p.d. of the s-plus of an agent randomly drawn from a  population of a great number of agents.}
\end{itemize}

\begin{remark}
{\rm
It is worth mentioning that a p.d. $\tau$ with the characteristics expressed in $(H_2)$ always exists, for example any $\tau$ with topological support included in $[0,1]^2$. Such a $\tau$ is consistent with a typology of contracts in which, according to \eqref{interazioni}, the absolute value of the s-plus $A_i$ [$B_i$, respectively], $i=1,2$, drawn from the investment of agent $i$'s s-plus [drawn from the counterpart's s-plus, respectively] is smaller than the absolute value of the initial agent $i$'s [counterpart's, respectively] s-plus. On the contrary, if the support of $\tau$ contains also values greater than $1$, it is possible to find conditions under which the validity of $(H_2)$ is assured. For example, one can argue in the following terms. On the one hand, a positive relative r.a.i. $\lambda$ is consistent with the assumption that $m_l:=\int_\R l\tau(dldr)$ and $m_r:=\int_\R r\tau(dldr)$ satisfy $m_l^p+m_r^p\leq1$ for some $p$ in $(0,1]$, which entails $1-2\lambda\leq p$. On the other hand, a negative relative r.a.i. is consistent with the assumption that $m_l+m_r>1$ and
\[
(\sigma^2_l+m_l^2)^{p/2}+(\sigma^2_r+m_r^2)^{p/2}\leq 1
\]
where $\sigma_l^2:=Var(\tl)$, $\sigma_r^2:=Var(\tr)$ and $p$ is some number in $(1,2]$. In fact, in any framework of this kind, it is easy to check that $1<1-2\lambda\leq p$ holds.
}
\end{remark}

\begin{remark}\label{rem2}
{\rm
It is important to stress the fundamental role played by the shape of $\tau$ in determining the value of the r.a.i. and, as we shall see in a while, in determining the level of inequality in the (stationary) distribution of s-plus. Finally, it is worth reaffirming that the shape of $\tau$ can be modified by actions of economic policy either regulating or influencing the market ratios $(l,r)$.
}
\end{remark}

\subsection{Main results: An informal presentation}\label{sec:main}
The main results of the present study pertain to the characterization of the steady states for the s-plus distribution and to the analysis of the possible existing connections between concentration of steady states and the values of the relative r.a.i. $\lambda$. It is worth recalling that a p.m. $\mu$ is said to be a \textit{steady} (equilibrium) \textit{state} for \eqref{C} if, taking $\mu_0=\mu$ entails $\mu_t=\mu$ for every $t>0$. Since, in the present framework, the class of steady states turns out to be the same as the one of (weak) limits of solutions of \eqref{C}, as $t\to+\infty$ (see Fact \ref{propstazionarie}), the main results of the present paper can be briefly summarized as follows:

\begin{itemize}
\item[(a)] In order to reach a steady state for the distribution of the s-plus, it is necessary that the initial p.d. $\mu_0$ be weak Pareto (see next Section \ref{sec:pareto} for some remarks on weak Pareto laws) of exponent $(1-2\lambda)$, where $\lambda$ is the value of the relative r.a.i. for every agent, $-1/2<\lambda<1/2$, and that the mean of $\mu_0$ is equal to zero, whenever it is finite (see next Fact \ref{fattoCN} in Section \ref{sec:2}). The mean is obviously finite whenever $1-2\lambda>1$, i.e., in the case corresponding to common empirical evidence on the tails of observed income distributions. It is worth stressing that, in view of the relationship $\lambda_{\bar{u}}(x)=x I_{\bar{u}}(x)$, inequality $1-2\lambda>1$ is satisfied if and only if holders of surplus are risk lovers, on the contrary of holders of subplus, who turn out to be risk averters.
\item[(b)] For every initial datum like in the previous point, the long-time s-plus p.d. $\mu_\infty$ (i.e., the weak limit of $\mu_t$, as $t\to+\infty$) is weak Pareto preserving the exponent $(1-2\lambda)$. See Facts \ref{fattoCS},\ref{fattocode} in Section \ref{sec:2}. 
Moreover, if $0\leq \lambda<1/2$, then $\mu_\infty$ is maximally concentrated, whilst, if $-1/2<\lambda<0$, the measure of inequality in $\mu_\infty$ decreases as $(1-2\lambda)$ increases. Apropos of this, see Fact \ref{fattoConc1} and Proposition \ref{PropConcwP} Section \ref{sec:pareto}.
\item[(c)] According to next Fact \ref{explosion} in Section \ref{sec:2}, if the relative r.a.i. $\lambda$ belongs to $(-1/2,0)$ and the weak Pareto initial p.d. has exponent $\a$ in $(0,2)$, then $\mu_t$ tends towards the egalitarian s-plus distribution (at zero) whenever $\a>1-2\lambda$, whilst, for values of $\a$ smaller than $1-2\lambda$, the unit probability mass distributed according to $\mu_t$, at time $t$, escapes to $\infty$, as $t\to+\infty$.
\end{itemize}

The rest of the paper is organized as follows. Section \ref{sec:pareto} recalls the definition of weak Pareto law and deals with the problem of how to measure its concentration according to the notion of Gini's index given in Subsection \ref{sec:subsecConc}. The subsequent Section \ref{sec:2} contains a precise formulation of the above results. Section \ref{sec:eco} deals with a few implications of those results, with a view to their fitness for political action. Section \ref{sec:complementi} concludes the paper with technical details concerning the proofs of propositions stated in Sections \ref{sec:pareto} and \ref{sec:2}.


\section{Some remarks on weak Pareto laws}\label{sec:pareto}
Following Mandelbrot (1960), a p.m. $\pi_\a$ on $\R$ is said to be a \textit{weak Pareto} law of exponent $\a>0$ if 
\begin{equation}\label{limiti}
\lim_{x\to+\infty}x^{\a}\pi_{\a}(-\infty,-x]=c_1,\qquad\lim_{x\to+\infty}x^{\a}\pi_{\a}(x,+\infty)=c_2
\end{equation}
with $c_1+c_2>0$. This section aims at supplying an interpretation of $\a$ as a measure of concentration w.r.t. the perfectly egalitarian distribution, according to the remarks made in Subsection \ref{sec:subsecConc}. First, notice that, in the subclass of $\CF_0$ containing the \textit{strict Pareto} p.d. functions $F$ defined by
\begin{equation}\label{Pareto}
dF(x)=\ind{[x_0,+\infty)}(x)\dfrac{\a x_0^\a}{x^{\a+1}}dx \qquad(x>x_0)
\end{equation}
with $x_0>0$ and $\a>1$, the partial ordering induced by the Lorenz curve becomes total and is simply directed by the exponent $\a$: The concentration curve of \eqref{Pareto} reads
\[
\vphi_{F}(\theta)= 1-(1-\theta)^{1-1/\a}\qquad(\theta\in (0,1))
\]
so that it is plain to check that it increases, i.e. economic inequality decreases, as $\a$ increases. By the way, Benini (1908) was the first to notice this fact. Something similar to this holds true for the entire class of weak Pareto laws. In fact, using p.d. functions and their corresponding p.m.'s symbols interchangeably, one has 

\begin{proposition}\label{PropConcwP}
Let $\pi_{\a_1}$ and $\pi_{\a_2}$ be p.m.'s on $\R$ such that
\[
\lim_{x\to+\infty}x^{\a_i}\pi_{\a_i}(-\infty,-x]=c^{-}_{i},\qquad\lim_{x\to+\infty}x^{\a_i}\pi_{\a_i}(x,+\infty)=c^{+}_{i}
\]
with $c^{-}_{i}+c^{+}_{i}>0$ for $i=1,2$, and $1<\a_1<\a_2<2$. Then, there is a suitable $\overline{\theta}$ in $(0,1)$ such that 
\[
\vphi_{\pi_{\a_1}}(\theta)\leq \vphi_{\pi_{\a_2}}(\theta)
\]
holds true for every $\theta$ in $(\overline{\theta},1)$.
\end{proposition}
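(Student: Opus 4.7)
The strategy is to show that for both weak Pareto laws, the Lorenz deficit $1-\vphi_{\pi_{\a_i}}(\theta)$ decays, as $\theta \uparrow 1$, like a power of $(1-\theta)$ whose exponent is strictly increasing in $\a_i$. The larger exponent then makes the deficit for $\a_2$ eventually much smaller than the one for $\a_1$, which is exactly what the proposition asserts.

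First, I would translate hypothesis \eqref{limiti} into a tail estimate for $A_{\pi_{\a_i}}$. By the definition of $A_F$, for every $x\geq 0$,
\[
1-A_{\pi_{\a_i}}(x) = \pi_{\a_i}(x,+\infty)+\pi_{\a_i}(-\infty,-x] \sim \bigl(c_i^{-}+c_i^{+}\bigr) x^{-\a_i}\qquad (x\to +\infty),
\]
so the tail of $A_{\pi_{\a_i}}$ is regularly varying of index $-\a_i$ at $+\infty$. Inverting (e.g. through the standard correspondence between the tail of a distribution and its quantile function for regularly varying tails) yields
\[
A_{\pi_{\a_i}}^{-1}(t) \sim \left(\frac{c_i^{-}+c_i^{+}}{1-t}\right)^{1/\a_i}\qquad (t\uparrow 1).
\]
Since $\a_i>1$, the mean $\CM_i:=\int_0^{+\infty}x\, dA_{\pi_{\a_i}}(x)$ is finite, so $\vphi_{\pi_{\a_i}}$ is well-defined on $[0,1]$ as prescribed in Subsection \ref{sec:subsecConc}.

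Next, starting from
\[
1-\vphi_{\pi_{\a_i}}(\theta) = \frac{1}{\CM_i}\int_\theta^1 A_{\pi_{\a_i}}^{-1}(t)\, dt,
\]
I would apply a Karamata-type integration of the regularly varying integrand to obtain
\[
1-\vphi_{\pi_{\a_i}}(\theta) \sim K_i\,(1-\theta)^{1-1/\a_i}\qquad (\theta\uparrow 1), \quad K_i:=\frac{\a_i\,(c_i^{-}+c_i^{+})^{1/\a_i}}{(\a_i-1)\,\CM_i}.
\]
As a sanity check, in the case of a strict Pareto law \eqref{Pareto} this recovers the formula $1-\vphi_F(\theta)=(1-\theta)^{1-1/\a}$ recalled immediately after \eqref{Pareto}.

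Finally, I would compare the two asymptotics. Since $1<\a_1<\a_2<2$, the exponents satisfy $0<1-1/\a_1<1-1/\a_2<1/2$. Hence, for any positive constants $K_1,K_2$,
\[
\frac{K_1(1-\theta)^{1-1/\a_1}}{K_2(1-\theta)^{1-1/\a_2}} = \frac{K_1}{K_2}\,(1-\theta)^{1/\a_2-1/\a_1}\longrightarrow +\infty\qquad(\theta\uparrow 1),
\]
because $1/\a_2-1/\a_1<0$. Combined with the previous step, this yields $\overline{\theta}\in(0,1)$ such that $1-\vphi_{\pi_{\a_1}}(\theta)>1-\vphi_{\pi_{\a_2}}(\theta)$ for every $\theta\in(\overline\theta,1)$, which is the assertion of the proposition.

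\textbf{Main obstacle.} The delicate step is the passage from the pointwise quantile asymptotics $A_{\pi_{\a_i}}^{-1}(t)\sim C_i(1-t)^{-1/\a_i}$ to the integral asymptotics for $1-\vphi_{\pi_{\a_i}}(\theta)$, because hypothesis \eqref{limiti} furnishes the constants $c_i^\pm$ only in the limit and not uniformly. The clean way out is to invoke Karamata's theorem for integrals of regularly varying functions; alternatively, for any fixed $\varepsilon>0$ one can find $\theta_\varepsilon$ such that the integrand is sandwiched, for $t\in(\theta_\varepsilon,1)$, between $(1\pm\varepsilon)C_i(1-t)^{-1/\a_i}$, then integrate these bounds explicitly and let $\varepsilon\downarrow 0$. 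The remaining comparison of powers of $(1-\theta)$ is then immediate, which is why the proposition can only assert the inequality on a tail $(\overline\theta,1)$ rather than on all of $(0,1)$.
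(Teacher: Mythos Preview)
Your proposal is correct and follows essentially the same route as the paper: both translate the weak-Pareto tail condition into $1-A_{\pi_{\a_i}}(x)\sim (c_i^-+c_i^+)x^{-\a_i}$, invert to get quantile asymptotics, integrate to obtain $1-\vphi_{\pi_{\a_i}}(\theta)\sim K_i(1-\theta)^{1-1/\a_i}$, and then compare exponents. The paper carries out explicitly the $\varepsilon$-sandwich you describe as the ``alternative'' to Karamata (fixing $\varepsilon>0$, bounding the quantile above and below, integrating the bounds in closed form), and even records an explicit expression for $\overline{\theta}$ in terms of $\varepsilon$, the $c_i^\pm$, the $\a_i$, and the means $\CM(A_{\pi_{\a_i}})$; otherwise the arguments coincide.
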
 

The proof is deferred to Subsection \ref{sec:conc}. 

The last proposition can be used to justify the adoption of any strictly decreasing function of $\a$, taking values in $[0,1]$, as a measure of the concentration within the family of the weak Pareto laws with $\a>1$. In fact, apart from the connection with the behaviour of the function $\vphi_F$ associated with \eqref{Pareto}, one ought to recall Herzel (1968), where a few concentration indices are defined on the basis of the slope of $\vphi_F$ in small neighbourhoods of $0$ and $1$, respectively. For $\a$ in $(0,1]$, the following argument is conducive to consider the corresponding weak Pareto law as maximally concentrated. Indeed, let $F_\a$ be the p.d. function associated with $\pi_\a$ and $F_\a^{(\o)}$ be the conditional restriction of $F_\a$ to $(-\o,\o)$, for any $\o$ for which $(-\o,\o)$ is charged by $F_\a$. Define $A_{F_\a}^{(\o)}$ to be the corresponding p.d. function of absolute values, that is,
\[
\R\ni x\mapsto A_{F_\a}^{(\o)}(x)=\dfrac{A_{F_\a}(x\wedge \o)}{A_{F_\a}(\o)}.
\]
As to the corresponding concentration function $\vphi_{F_{\a}^{(\o)}}$, in Subsection \ref{sec:conc} the following fact will be proved.

\begin{fact}\label{fattoConc1}
If $\a$ belongs to $(0,1]$, then
\[
\lim_{\o\to+\infty}\vphi_{F_{\a}^{(\o)}}(\theta)=0
\]
for every $\theta$ in $(0,1)$.
\end{fact}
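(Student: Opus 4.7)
The plan is to rewrite $\vphi_{F_{\a}^{(\o)}}(\theta)$ as a ratio of two integrals of the original $A_{F_\a}^{-1}$ and then let $\o\to+\infty$, exploiting the fact that the weak Pareto tail with exponent $\a\leq 1$ forces the absolute mean $\CM(A_{F_\a})$ to be infinite. In this way the denominator will diverge while the numerator stays bounded for any $\theta<1$, so the ratio will collapse to $0$.

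Setting $p_\o := A_{F_\a}(\o)$, the tail condition \eqref{limiti} gives $p_\o\uparrow 1$. From $A^{(\o)}_{F_\a}(x)=A_{F_\a}(x\wedge\o)/p_\o$ one immediately reads off $\bigl(A^{(\o)}_{F_\a}\bigr)^{-1}(t)=A_{F_\a}^{-1}(p_\o t)$ for $t\in[0,1)$. A change of variable $s=p_\o t$ in both the numerator and the denominator of the Lorenz ratio then produces the clean formula
\[
\vphi_{F_{\a}^{(\o)}}(\theta) \;=\; \frac{\int_0^{p_\o\theta} A_{F_\a}^{-1}(s)\,ds}{\int_0^{p_\o} A_{F_\a}^{-1}(s)\,ds}.
\]

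For fixed $\theta\in(0,1)$ and $\o$ large enough, $p_\o\theta$ lies in a compact subinterval of $[0,1)$ on which $A_{F_\a}^{-1}$ is bounded, so the numerator is bounded (and in fact converges to $\int_0^\theta A_{F_\a}^{-1}(s)\,ds$). The denominator equals $p_\o\,\CM(A^{(\o)}_{F_\a})$ and, by monotone convergence, tends to $\CM(A_{F_\a})=\int_0^{+\infty}(1-A_{F_\a}(x))\,dx$; the Pareto tail $1-A_{F_\a}(x)\sim (c_1+c_2)x^{-\a}$ combined with $\a\leq 1$ forces this integral to be $+\infty$ (polynomial divergence for $\a<1$, logarithmic for $\a=1$). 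Hence the ratio tends to $0$, as required. The one point worth flagging as mildly delicate is precisely the borderline case $\a=1$, where the divergence is only logarithmic and so must be verified explicitly from the tail asymptotics; beyond that, the argument is a routine manipulation of the definition of $\vphi$ combined with the change of variables above.
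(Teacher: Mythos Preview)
Your proof is correct and in fact cleaner than the paper's own argument. The key observation you exploit---that $\bigl(A^{(\o)}_{F_\a}\bigr)^{-1}(t)=A_{F_\a}^{-1}(p_\o t)$ and hence $\vphi_{F_\a^{(\o)}}(\theta)$ can be written as the ratio $\int_0^{p_\o\theta}A_{F_\a}^{-1}/\int_0^{p_\o}A_{F_\a}^{-1}$---reduces the whole fact to the single structural point that $\CM(A_{F_\a})=+\infty$ when $\a\le 1$. The paper instead works from the complementary representation $\vphi_{F_\a^{(\o)}}(\theta)=1-\CM(A_{F_\a}^{(\o)})^{-1}\int_\theta^1(A_{F_\a}^{(\o)})^{-1}$, inserts explicit two-sided tail bounds of the form $(l\pm\veps)/x^\a$ to control both the integral over $[\theta,1]$ and $\CM(A_{F_\a}^{(\o)})$ separately (splitting the cases $\a<1$ and $\a=1$), and then lets $\veps\downarrow 0$ after $\o\to\infty$. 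Your route is shorter and isolates the conceptual reason for the result; the paper's route, being built on quantitative tail estimates, would more readily yield a rate for $\vphi_{F_\a^{(\o)}}(\theta)\to 0$ in terms of $\o$ if one wanted that.
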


Hence, looking at $\vphi_{F_{\a}^{(\o)}}$ as an approximation of $\vphi_{F_{\a}}$, for large values of $\o$ $-$ as a consequence of the fact that $A_{F_\a}^{(\o)}$, in its turn, is an approximation of $A_{F_\a}$ $-$ one can consider the weak Pareto p.d.'s $\pi_\a$ with $0<\a\leq 1$ as \textit{maximally concentrated} from Gini's index viewpoint.


In view of the aims of the present paper, it is not necessary to define any specific concentration index, since the main conclusions will depend on one of the following three relations: $\a=1-2\lambda$, $\a>1-2\lambda$, $\a<1-2\lambda$, $\a$ being the exponent of a weak Pareto law describing the initial datum $\mu_0$ in \eqref{C}. For reasons that will be explained very soon $-$ but that can be easily deduced from Subsection \ref{sec:main} $-$ an economic situation is called \textit{conservation-oriented} or \textit{egalitarianism-oriented} or \textit{inequality-oriented}, depending on which of the above three relations is satisfied.

\section{Precise formulation of the main results}\label{sec:2}
This section is devoted to the complete formulation of the main results already expounded in Subsection \ref{sec:main}. In point of fact, our role is limited to rearranging, in a form suitable for the present setting, a few well-known propositions proved in Bassetti et al. (2011), Bassetti and Perversi (2013) and Perversi and Regazzini (2015). It is worth recalling that the limit $\mu_\infty$ of the solution $\mu_t$ of \eqref{C}, as $t\to+\infty$, is meant in the sense of weak convergence of p.m.'s: $\mu_t$ converges \textit{weakly} to the p.m. $\mu_\infty$ on $\R$ if $\mu_t(-\infty,x]\to\mu_\infty(-\infty,x]$ at each $x$ such that $\mu_\infty\{x\}=0$. It is useful to notice the following preliminary statement, to be proved in Subsection \ref{sec:proofs}.

\begin{fact}\label{propstazionarie}
If a p.d. is a steady state for \eqref{C}, then there exists an initial p.d. for which the corresponding solution $\mu_t$ of \eqref{C} converges weakly, as $t\to+\infty$, to the above steady state. Conversely, if, for a given initial datum, $\mu_t$ converges weakly to a p.d., then such a limiting p.d. is a steady state.
\end{fact}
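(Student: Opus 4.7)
The statement naturally splits into two directions, with the converse carrying the actual content.

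The forward direction is immediate: if $\mu$ is a steady state, take $\mu_0 := \mu$; by the very definition of steady state $\mu_t = \mu$ for every $t > 0$, so $\mu_t \to \mu$ weakly as $t\to+\infty$ trivially.

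For the converse, suppose $\mu_t$ converges weakly to some $\mu_\infty$, and my plan is to prove that $\mu_\infty$ is a fixed point of $Q^+$, i.e.\ $Q^+(\mu_\infty) = \mu_\infty$. The key preliminary step is to establish weak continuity of the collision operator $Q^+$. For $\psi \in C_b(\R;\R)$, set
\[
\Psi(v,w) := \int_{[0,+\infty)^2} \psi(lv+rw)\,\tau(dl\,dr);
\]
this function is bounded by $\|\psi\|_\infty$ and continuous on $\R^2$ by dominated convergence. Since $\mu_t \to \mu_\infty$ weakly implies $\mu_t \otimes \mu_t \to \mu_\infty \otimes \mu_\infty$ weakly on $\R^2$, the identity $\int_\R \psi\, dQ^+(\mu) = \int_{\R^2} \Psi\, d(\mu \otimes \mu)$ yields $\int \psi\, dQ^+(\mu_t) \to \int \psi\, dQ^+(\mu_\infty)$. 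Next I would integrate \eqref{C} into its mild form,
\[
\int \psi\, d\mu_t = e^{-t} \int \psi\, d\mu_0 + \int_0^t e^{-(t-s)} \int \psi\, dQ^+(\mu_s)\, ds,
\]
and let $t \to +\infty$. The LHS tends to $\int \psi\, d\mu_\infty$. The first term on the RHS vanishes; after the substitution $u = t - s$, the second becomes $\int_0^t e^{-u}\int \psi\, dQ^+(\mu_{t-u})\, du$, which by the weak continuity of $Q^+$ and dominated convergence (with dominating function $\|\psi\|_\infty\, e^{-u}$) tends to $\int_0^{+\infty} e^{-u}\, du \cdot \int \psi\, dQ^+(\mu_\infty) = \int \psi\, dQ^+(\mu_\infty)$. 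Thus $\int \psi\, d\mu_\infty = \int \psi\, dQ^+(\mu_\infty)$ for every $\psi \in C_b(\R;\R)$, i.e.\ $\mu_\infty = Q^+(\mu_\infty)$ as p.m.'s.

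To conclude, the constant trajectory $t \mapsto \mu_\infty$ solves \eqref{C} with initial datum $\mu_\infty$: the time derivative on the LHS vanishes identically, and the RHS equals $\int \psi\, dQ^+(\mu_\infty) - \int \psi\, d\mu_\infty = 0$ by the fixed-point identity just obtained. By the uniqueness of solutions to the Cauchy problem \eqref{C}, already established in Bassetti et al.\ (2011) and invoked throughout the present paper, this constant trajectory is the only solution starting at $\mu_\infty$; hence $\mu_\infty$ is a steady state. The main obstacle in this plan is precisely the weak continuity of $Q^+$ together with the passage to the limit $t\to+\infty$ in the integrated equation: both are standard but rely on a careful application of continuity of $\Psi$, weak convergence of product measures, and bounded convergence. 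The mild form of \eqref{C} is preferred over its differential version precisely because the exponential factor $e^{-(t-s)}$ renders the limit passage straightforward and sidesteps any regularity questions about $t\mapsto \mu_t$.
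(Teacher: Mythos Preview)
Your argument is correct, but it follows a genuinely different route from the paper's own proof. The paper does not touch the mild form of \eqref{C} at all; instead it exploits the probabilistic representation $V_t=\sum_{j=1}^{\tnu_t}\tbe_{j,\tnu_t}X_j$ from Subsection \ref{sec:comments}. Splitting the summands according to whether the first factor in the coefficient $\tbe_{j,\tnu_t}$ is $\tl_1$ or $\tr_1$ yields $V_t=\tl_1 V_{t,l}+\tr_1 V_{t,r}$, where, conditionally on $(\tl_1,\tr_1)$, the two partial sums are independent and each has the same distributional structure as $V_t$ itself. Hence if $V_t$ converges in law, so do $V_{t,l}$ and $V_{t,r}$, to the same limit $\mu_\infty$, and the identity $V_t=\tl_1 V_{t,l}+\tr_1 V_{t,r}$ passes to the limit as the fixed-point relation $\mu_\infty=Q^+(\mu_\infty)$.

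Your analytic approach, based on weak continuity of $Q^+$ and Duhamel's formula, is more self-contained: it requires no knowledge of the Wild-sum representation and would apply verbatim to any kinetic equation of the form $\partial_t\mu_t=Q^+(\mu_t)-\mu_t$ with a weakly continuous collision operator. The paper's route, by contrast, is tightly coupled to the recursive branching structure already set up for the central-limit analysis, so it costs nothing extra in that context and makes the smoothing-transform nature of the fixed-point equation transparent.
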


Therefore, characterizing equilibria is equivalent to characterizing the limits of solutions of \eqref{C} as initial data $\mu_0$ vary. As to the existence of steady states, there is an economic literature in which the existence of steady states is guaranteed independently of the form of the initial data but no analytical definition of those states is supplied. On the contrary, there are models allowing perfectly specified equilibria sometimes obtained in the presence of particular initial p.d.'s. 
Coming back to the framework stated in Subsection \ref{sec:SUBassumptions}, one is faced with an intermediate situation, that is, convergence of $\mu_t$ occurs only in the presence of a distinguished class of initial data $\mu_0$ $-$ essentially, the class of the weak Pareto laws $-$ and the steady states are completely determined in the form of their Fourier-Stieltjes transforms. These statements, derived from Theorems 1 and 4 in Perversi and Regazzini (2015), are condensed into the following

\begin{fact}\label{fattoCN}
Assume $(H_1)$-$(H_2)$ are in force together with the extra conditions that 
\begin{equation}\label{conttau}
\text{the marginal p.d.'s of $\tau$ are continuous}
\end{equation}
and that 
\begin{equation}\label{supporto}
\begin{split}
& \text{every open disk centred at any point $(x,y)\in[0,+\infty)^2$,}\\
& \text{for which $x^{1-2\lambda}+y^{1-2\lambda}=1$, has strictly positive $\tau$-probability.}
\end{split}
\end{equation}
Then, in order that the solution of \eqref{C} converge, as $t\to+\infty$, to a steady state, it is necessary that one of the following conditions hold true:
\begin{itemize}
\item[{\rm(a)}] The relative r.a.i. $\lambda$ is smaller than $-1/2$ and, at time zero, the total income reduces to the subsistence level {\rm(}that is, $\mu_0$ is the point mass at $0${\rm)}.
\item[{\rm(b)}] The relative r.a.i. $\lambda$ is equal to $-1/2$ and, at time zero, the distribution of s-plus has zero mean and finite variance $\sigma_0^2$.
\item[{\rm(c)}] The relative r.a.i. $\lambda$  belongs to $(0,1/2)$ and the distribution at time zero is weak Pareto {\rm(}according to \eqref{limiti}{\rm)} with exponent $\a=1-2\lambda$.
\item[{\rm(d)}] The relative r.a.i. $\lambda$ is equal to $1/2$ and the distribution at time zero is weak Pareto with exponent $\a=1$ and $c_1=c_2$ in \eqref{limiti}.
\item[{\rm(e)}] The relative r.a.i. $\lambda$  belongs to $(-1/2,0)$ and the distribution at time zero is weak Pareto {\rm(}according to \eqref{limiti}{\rm)} with zero mean and exponent $\a=1-2\lambda$. 
\end{itemize}
\end{fact}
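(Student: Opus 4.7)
The plan is to reduce \eqref{C} to an equation on Fourier–Stieltjes transforms and to exploit the probabilistic (McKean-tree) representation of its solution, which is the common thread of Theorems 1 and 4 of Perversi and Regazzini (2015). Writing $\hat{\mu}_t(\xi):=\int_{\R}e^{i\xi v}\mu_t(dv)$ and taking the test function $\psi(v)=e^{i\xi v}$ in \eqref{C} yields
\[
\partial_t\hat{\mu}_t(\xi)=-\hat{\mu}_t(\xi)+\int_{[0,+\infty)^2}\hat{\mu}_t(l\xi)\hat{\mu}_t(r\xi)\tau(dl\,dr).
\]
This equation can be solved via Wild's series $\hat{\mu}_t(\xi)=e^{-t}\sum_{n\geq 0}(1-e^{-t})^n\hat{q}_n(\xi)$ together with a McKean-tree representation
\[
\hat{q}_n(\xi)=\E\Big[\prod_{j=1}^{\nu_n}\hat{\mu}_0\big(\beta_j^{(n)}\xi\big)\Big],
\]
where the multiplicative weights $\beta_1^{(n)},\dots,\beta_{\nu_n}^{(n)}$ are the products, read along root-to-leaf paths, of i.i.d.\ copies of $(\tl,\tr)\sim\tau$ placed at the internal nodes of an independent binary fragmentation tree.

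Next, I would invoke the Biggins-type martingale structure: the random sum $M_n^{(p)}:=\sum_j(\beta_j^{(n)})^p$ is a nonnegative martingale precisely when $\CS(p)=0$. Under $(H_2)$, $p=1-2\lambda$ is the smallest such positive root, and the support condition \eqref{supporto} together with the continuity \eqref{conttau} place us in the classical regime where branching-random-walk theory (as exploited in Bassetti et al.\ (2011) and Perversi–Regazzini (2015)) guarantees nondegenerate $L^1$-convergence of $M_n^{(1-2\lambda)}$. Weak convergence of $\mu_t$ then translates into pointwise convergence of $\hat{\mu}_t(\xi)$, which, through the product representation above, becomes a strong constraint on the local behaviour of $\hat{\mu}_0$ near the origin.

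The necessary conditions (a)--(e) then emerge by splitting according to where the exponent $1-2\lambda$ sits:
\begin{itemize}
\item If $1-2\lambda>2$ (case (a)), no admissible stable-type limit exists, so only $\mu_0=\delta_0$ is compatible with convergence.
\item If $1-2\lambda=2$ (case (b)), the tree product can only converge to a Gaussian-type characteristic function, forcing $\mu_0$ to have zero mean and finite variance.
\item If $1-2\lambda\in(1,2)$ (case (e)), the first-order Taylor coefficient of $\hat{\mu}_0$ at $0$ must vanish (whence $\mu_0$ has zero mean), and the expansion $\hat{\mu}_0(\xi)=1+O(|\xi|^{1-2\lambda})$ must hold, which, via the Abelian/Tauberian dictionary, is equivalent to $\mu_0$ being weak Pareto of exponent $1-2\lambda$ in the sense of \eqref{limiti}.
\item If $1-2\lambda\in(0,1)$ (case (c)), the same Tauberian correspondence yields the weak Pareto condition on $\mu_0$ without further moment constraints.
\item If $1-2\lambda=1$ (case (d)), a Cauchy-type boundary case is reached, and the symmetry $c_1=c_2$ in \eqref{limiti} is needed to kill the logarithmic drift which would otherwise prevent convergence.
\end{itemize}

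The principal technical obstacle is the \emph{necessity} half of this classification: one must go from the mere assumption of convergence of $\mu_t$ back to a precise local expansion of $\hat{\mu}_0$ at the origin, and from there to the tail asymptotics \eqref{limiti}. This requires fine control of the empirical distribution of the tree-weights $\{\beta_j^{(n)}\}$, provided by hypotheses \eqref{conttau}--\eqref{supporto} in conjunction with the nondegeneracy of the Biggins limit, plus a Tauberian transfer between the behaviour of $\hat{\mu}_0$ at $0$ and the tails of $\mu_0$. Once Theorems 1 and 4 of Perversi and Regazzini (2015) are invoked to supply exactly these two ingredients, only a mechanical rewriting in terms of the relative risk aversion index $\lambda$ (using $\alpha=1-2\lambda$) is needed to obtain the five cases (a)--(e).
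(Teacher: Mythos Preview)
Your proposal is correct and aligned with the paper's own treatment: the paper does not give a self-contained proof of this fact but explicitly states that it is ``derived from Theorems 1 and 4 in Perversi and Regazzini (2015)'', merely rephrasing those results in terms of the relative r.a.i.\ $\lambda$. You do the same, while additionally (and accurately) sketching the machinery---Wild series, McKean-tree weights, the Biggins martingale at the exponent $1-2\lambda$, and the Tauberian link between $\hat{\mu}_0$ near $0$ and the Pareto tails---that underlies those theorems; this extra detail is consistent with the probabilistic representation the paper recalls in Subsection~\ref{sec:comments}.
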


\begin{remark}
{\rm As far as the extra conditions \eqref{conttau} and \eqref{supporto} are concerned, they play a mere technical role, that can be noted in the proofs of Theorems 1, 2, 4 and 5 in Perversi and Regazzini (2015). In any case, the meaning of \eqref{conttau} is clear, while \eqref{supporto} is automatically satisfied whenever the increment $\Delta$ (and not only $\E(\Delta)$) of the joint welfare function is equal to zero in every contract.
}
\end{remark}
\begin{remark}
{\rm It is noteworthy that, according to Fact \ref{fattoCN}, if the expectation of $\mu_0$ is finite, then it must be equal to zero. This is reminiscent of a status of things in which surplus balance subplus, in mean. This comment can be improved by resorting to the argument, used in Subsection 3.3 of Perversi and Regazzini (2015), entailing that, as $t\to+\infty$, the law $\mu_t$ tends to concentrate on $(a,+\infty)$ [$(-\infty,-a)$, respectively], for every $a>0$, whenever the aforesaid expectation is strictly positive [negative, respectively]. 
}
\end{remark}

At this stage, one can state the form of the limiting steady states, which turn out to be scale-mixtures of \textit{stable laws}. The appearance here of stable laws is not accidental. In fact, on the one hand, a link can be established between the central limit theorem and the convergence of the solution $\mu_t$ of problem \eqref{C} $-$ see Subsection \ref{sec:comments} $-$ and, on the other hand, it is well-known that stable laws are limiting laws of normed sums of i.i.d. random numbers. Unfortunately, closed forms for stable laws $g$ are known in only a very small number of cases, but there is an explicit expressions for their Fourier-Stieltjes transforms $\hat{g}$, that is
\begin{equation}\label{cfstabile}
\R \ni \xi \mapsto \hat{g}(\xi;\a,\chi,k_\a,\gamma)= \exp\Big\{i\chi\xi-k_\a|\xi|^\a\Big(1-i\gamma \dfrac{\xi}{|\xi|}\o(\xi,\a)\Big)\Big\},
\end{equation}
where $\a$, $\gamma$, $\chi$, $k_\a$ are constants ($k_\a\geq0$, $0<\a\leq2$, $|\gamma|\leq 1$, with the proviso that $\gamma=0$ if $\a=2$) and 
\[
\begin{aligned}
\o(\xi,\a)&= \tan (\pi\a/2) & \a\neq1\\
& =2\pi^{-1}\log|\xi| & \a=1.
\end{aligned}
\]

In what follows, $\hat{g}(\cdot;\a,\chi,k_\a,\gamma)$ is extended to any strictly positive $\a$ by putting
\[
\hat{g}(\xi;\a,\chi,k_\a,\gamma)=e^{i\chi\xi}\qquad(\xi\in\R,\;\a>2).
\]

\begin{fact}\label{fattoCS}
If $\mu_t$ is solution of \eqref{C} under $(H_1)$-$(H_2)$, and if one of the conditions {\rm (a)-(e)} holds true, then $\mu_t$ converges, as $t\to+\infty$, to a p.m. $\mu_\infty$ having Fourier-Stieltjes transform 
\[
\hat{\mu}_\infty(\xi)=\int_{0}^{+\infty}\hat{g} (\xi m^{1/(1-2\lambda)};1-2\lambda, \chi,k_{1-2\lambda} ,\gamma)\nu_{1-2\lambda}(dm)\qquad(\xi\in\R)
\]
where $\nu_{1-2\lambda}$ is a uniquely determined p.m. on $[0,+\infty)$. Moreover, the function $\hat{g}$ has one of the following forms, depending on which condition {\rm (a)-(e)} is met by $\mu_0$ in {\rm Fact \ref{fattoCN}}:
\begin{itemize}
\item[{\rm(a')}] $\hat{g}(\xi ;1-2\lambda, \chi,k_{1-2\lambda}, \gamma)\equiv 1$.
\item[{\rm(b')}] $\qquad''\qquad\equiv e^{-\sigma_0^2\xi^2/2}$.
\item[{\rm(c')}] $\qquad''\qquad\equiv \exp\Big\{-k_{1-2\lambda}|\xi|^{1-2\lambda}\Big(1-i\gamma \dfrac{\xi}{|\xi|}\tan(\pi(1-2\lambda)/2)\Big)\Big\}$.
\item[{\rm(d')}] $\qquad''\qquad\equiv \exp\Big\{i\chi\xi-k_{1}|\xi|\Big\}$.
\item[{\rm(e')}] $\qquad''\qquad\equiv \exp\Big\{-k_{1-2\lambda}|\xi|^{1-2\lambda}\Big(1-i\gamma \dfrac{\xi}{|\xi|}\tan(\pi(1-2\lambda)/2)\Big)\Big\}$.
\end{itemize}
\end{fact}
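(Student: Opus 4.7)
The approach is to take the Fourier--Stieltjes transform of \eqref{C} and to analyse the resulting equation via a Wild-type series. With $\hat\mu_t(\xi):=\int_{\R}e^{i\xi v}\mu_t(dv)$, substituting $\psi(v)=e^{i\xi v}$ (admissible after a routine approximation by bounded continuous functions) converts \eqref{C} into the integro-differential equation
\[
\frac{\partial}{\partial t}\hat\mu_t(\xi)=\int_{[0,+\infty)^2}\hat\mu_t(l\xi)\hat\mu_t(r\xi)\tau(dl\,dr)-\hat\mu_t(\xi),
\]
and Picard iteration in $t$ produces the Wild representation $\hat\mu_t(\xi)=e^{-t}\sum_{n\geq0}(1-e^{-t})^n\hat q_n(\xi)$, where $\hat q_n$ is the $n$-fold iterate of the quadratic smoothing operator associated with $\tau$ acting on $\hat\mu_0$.

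The central probabilistic step is to unfold $\hat q_n(\xi)=\E\bigl[\prod_{j=1}^{2^n}\hat\mu_0(\beta_{j,n}\xi)\bigr]$, where the random weights $(\beta_{j,n})_{j=1}^{2^n}$ are the products, along the root-to-leaf paths of a complete binary tree of depth $n$, of i.i.d.\ pairs of multipliers drawn from $\tau$. Under $(H_2)$, $\alpha:=1-2\lambda$ is the smallest positive zero of $\mathcal{S}$, and the independence across the tree shows that $M_n:=\sum_{j=1}^{2^n}\beta_{j,n}^{\alpha}$ is a non-negative martingale of unit expectation. Strict convexity of $\mathcal{S}$ at this smallest root yields uniform integrability of $(M_n)$ by a Biggins-type argument, hence an a.s.\ and $L^1$ limit $M_\infty$; its law is precisely the mixing measure $\nu_{1-2\lambda}$.

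Next I invoke the conditional stable limit theorem. In each of the cases (c)--(e), $\mu_0$ lies in the Gnedenko--Kolmogorov domain of attraction of a stable law with Fourier--Stieltjes transform $\hat g(\cdot;\alpha,\chi,k_\alpha,\gamma)$: the constants $(k_\alpha,\gamma)$ are recovered from the tail constants $c_1,c_2$ in \eqref{limiti}, while the zero-mean hypothesis fixes $\chi$. Conditionally on the weights, the weighted sum $\sum_j\beta_{j,n}X_j$ with $X_j$ i.i.d.\ $\mu_0$ converges in law to $M_\infty^{1/\alpha}Y$ for $Y\sim\hat g$ independent of $M_\infty$. Taking Fourier transforms of this convergence, integrating out $M_\infty$, and letting $t\to+\infty$ in the Wild series (whose geometric weights $e^{-t}(1-e^{-t})^n$ concentrate on large $n$) produces
\[
\hat\mu_\infty(\xi)=\int_0^{+\infty}\hat g\bigl(\xi m^{1/(1-2\lambda)};1-2\lambda,\chi,k_{1-2\lambda},\gamma\bigr)\nu_{1-2\lambda}(dm),
\]
which is the desired formula. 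The items (a'), (b'), (d') emerge as degeneracies of this scheme: $\mu_0=\delta_0$ in (a'); the Gaussian domain with $\alpha=2$ and $k_2=\sigma_0^2/2$ in (b'); and the symmetric Cauchy regime ($\alpha=1$, $c_1=c_2$) in (d'), in which $\nu_1$ collapses to a point mass and the mixture disappears.

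The main obstacle is the simultaneous handling of three limits: the Wild sum in $n$, the expectation over the tree-generated weights, and the stable central limit for weighted sums with $\mu_0$-distributed summands. Strict convexity of $\mathcal{S}$ at its smallest root, together with the technical conditions \eqref{conttau}--\eqref{supporto} ensuring a non-degenerate branching mechanism and no mass escaping the stable domain of attraction, is exactly what makes these three operations interchangeable. The detailed bookkeeping is carried out in Bassetti and Perversi (2013) and Perversi and Regazzini (2015); the present statement then follows by specialising those conclusions to the setting dictated by $(H_1)$--$(H_2)$ and by recognising that cases (a')--(e') exhaust the scenarios listed in Fact \ref{fattoCN}.
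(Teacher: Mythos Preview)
Your overall strategy---Wild expansion, probabilistic representation via random weights, conditional stable central limit theorem, and identification of the mixing measure as the law of a weight-martingale limit---is exactly the route the paper takes (by invoking Bassetti, Ladelli and Matthes (2011) and the discussion in Subsection~\ref{sec:comments}). However, two concrete errors prevent the argument as written from going through.

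First, the Wild coefficients $\hat q_n$ are \emph{not} the $n$-fold iterates of the quadratic smoothing map, and consequently they do not unfold over a \emph{complete} binary tree of depth~$n$ with $2^n$ leaves. The correct recursion, recorded in the paper as \eqref{qn}, is $\hat q_n(\xi)=\frac{1}{n-1}\sum_{j=1}^{n-1}\int \hat q_j(l\xi)\hat q_{n-j}(r\xi)\,\tau(dl\,dr)$, so that $\hat q_n$ averages over all binary trees with exactly $n$ leaves, weighted by the Yule/McKean law. Accordingly, the probabilistic representation used in Subsection~\ref{sec:comments} is $V_t=\sum_{j=1}^{\tnu_t}\tbe_{j,\tnu_t}X_j$ with weights built by the ``split one leaf uniformly at random'' rule \eqref{betaricorsivo}, and the relevant martingale is $M_n^{(1-2\lambda)}=\sum_{j=1}^{n}\tbe_{j,n}^{\,1-2\lambda}$ with $n$ (not $2^n$) summands. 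Your complete-binary-tree martingale does converge and its limit has the same law $\nu_{1-2\lambda}$ (both satisfy the same distributional fixed-point equation), but it is not the object that appears in the Wild series, so your chain ``Wild sum $\Rightarrow$ tree $\Rightarrow$ martingale $\Rightarrow$ mixture'' is broken at the first link.

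Second, your assertion that in case~(d') the mixing measure $\nu_1$ collapses to a point mass is incorrect. As the remark immediately following Fact~\ref{fattoCS} makes explicit, $\nu_{1-2\lambda}$ degenerates if and only if $\tau$ is supported on the curve $\{x^{1-2\lambda}+y^{1-2\lambda}=1\}$; this has nothing to do with $\alpha=1$. A minor further point: conditions \eqref{conttau}--\eqref{supporto} are hypotheses of Fact~\ref{fattoCN} (necessity), not of Fact~\ref{fattoCS} (sufficiency), so they play no role here.
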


The values of $k_{1-2\lambda}$ and $\gamma$ are given in Subsection \ref{sec:comments}, while $\nu_{1-2\lambda}$ is identified therein with the limiting law of a distinguished sequence of random numbers. 

\begin{remark}
{\rm Conditions on $\tau$ under which $\mu_{\infty}$ is just a stable law (that is, $\nu_{1-2\lambda}$ reduces to a unit mass at some point) are provided by Proposition 2 in Bassetti et al. (2011). More precisely: \textit{If $-1/2\leq\lambda<1/2$, then the limiting p.m. $\mu_{\infty}$ reduces to a stable law if and only if $\tau$ concentrates the entire mass on $\{(x,y)\in[0,+\infty)^2:\; x^{1-2\lambda}+y^{1-2\lambda}=1\}$}}.
\end{remark}

On the basis of rich empirical evidence of tails of observed income distributions, the most realistic forms ought to be those which satisfy conditions (e) and (e'), under which the tails of $\mu_\infty$ are of the same type as the ones of $\mu_0$. This fact can be ascertained mathematically, even if it is impossible to obtain explicit forms for mixtures $\mu_\infty$. 

\begin{fact}\label{fattocode}
Under $(H_1)$-$(H_2)$, let $\mu_\infty$ be the same p.m. as in {\rm Fact \ref{fattoCS}} and $\nu_{1-2\lambda}$ be non-degenerate. Then:
\begin{itemize}
\item[{\rm(i)}] If one of the conditions among {\rm (c)-(e)} is in force and $c_1\cdot c_2>0$, then
\[
\lim_{x\to+\infty}x^{1-2\lambda}\mu_{\infty}((-\infty,-x])=c_1 \quad\text{and}\quad \lim_{x\to+\infty}x^{1-2\lambda}\mu_{\infty}((x,+\infty))=c_2.
\]
\item[{\rm(ii)}] If one of the conditions among {\rm (c)-(e)} is in force and $c_1>0$, $c_2=0$, then
\[
\lim_{x\to+\infty}x^{1-2\lambda}\mu_{\infty}((-\infty,-x])=c_1 \quad\text{and}\quad \mu_{\infty}((x,+\infty))=O\Big(\dfrac{1}{x^p}\Big)
\]
as $x\to+\infty$, for every positive $p$ such that $\CS(p)<0$.
\item[{\rm(iii)}] If one of the conditions among {\rm (c)-(e)} is in force and $c_1=0$, $c_2>0$, then
\[
\mu_{\infty}((-\infty,-x])=O\Big(\dfrac{1}{x^p}\Big) \quad\text{and}\quad \lim_{x\to+\infty}x^{1-2\lambda}\mu_{\infty}((x,+\infty))=c_2
\]
as $x\to+\infty$, for every positive $p$ such that $\CS(p)<0$.
\item[{\rm(iv)}] If {\rm (b)} is in force and $\CS$ admits only one zero, then
\[
\mu_{\infty}((-\infty,-x])=\mu_{\infty}((x,+\infty))=o\Big(\dfrac{1}{x^p}\Big)
\]
as $x\to+\infty$, for every positive $p$.
\item[{\rm(v)}] If {\rm (b)} is in force and $\CS$ admits two distinct zeros, say $1-2\lambda$ and $l>1-2\lambda$, then
\[
\mu_{\infty}((-\infty,-x])=\mu_{\infty}((x,+\infty))\geq \dfrac{M}{2 l^{1+\delta}}\dfrac{1}{x^l(\log x)^{1+\delta}}
\]
for every positive $\delta$, for sufficiently large $x$ and a suitable constant $M>0$.
\end{itemize}
\end{fact}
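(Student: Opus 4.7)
The plan is to exploit the scale-mixture representation of $\mu_\infty$ given by Fact \ref{fattoCS}: setting $\alpha:=1-2\lambda$, letting $Y$ have Fourier-Stieltjes transform $\hat{g}(\cdot;\alpha,\chi,k_\alpha,\gamma)$ and $V$ be independent of $Y$ with law $\nu_\alpha$, one has $\mu_\infty=\text{Law}(V^{1/\alpha}Y)$. The tails of $\mu_\infty$ will then arise from the interplay between the tails of $Y$ (stable for $\alpha<2$, Gaussian for $\alpha=2$) and those of $V$ (controlled, when they are heavy, by the second zero of $\CS$).

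For parts (i)-(iii) we are in one of conditions (c)-(e), so $\alpha\in(0,2)$ and $Y$ has a genuine $\alpha$-stable law with regularly varying tails of index $-\alpha$: $\lim_{y\to+\infty}y^\alpha P(Y>y)=A_+$ and $\lim_{y\to+\infty}y^\alpha P(Y<-y)=A_-$, the constants $A_\pm$ being determined by $(k_\alpha,\gamma)$. Conditioning on $V$ gives
\[
x^\alpha\, P(V^{1/\alpha}Y>x) \;=\; \int_{0}^{+\infty} v\,\bigg[\Big(\tfrac{x}{v^{1/\alpha}}\Big)^{\!\alpha}P\Big(Y>\tfrac{x}{v^{1/\alpha}}\Big)\bigg]\,\nu_\alpha(dv),
\]
and a dominated-convergence argument, valid because $\int v\,\nu_\alpha(dv)<+\infty$ (a consequence of $\CS$ being strictly negative on an interval surrounding $\alpha$, which provides finite moments of $V$ slightly above $1$), produces $\lim x^\alpha P(V^{1/\alpha}Y>x)=A_+\E[V]$. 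The identification $A_+\E[V]=c_2$, and its mirror $A_-\E[V]=c_1$, is built into the convergence theorems of Perversi and Regazzini (2015) underlying Fact \ref{fattoCS}: the stable parameters $(k_\alpha,\gamma)$ and the mixing law $\nu_\alpha$ are calibrated precisely so that the tail coefficients of $\mu_\infty$ coincide with those of $\mu_0$. Parts (ii) and (iii) are the asymmetric specializations where $A_\pm=0$ on the side with $c_\pm=0$; the non-vanishing tail is handled as in (i), while the vanishing tail is bounded via Markov's inequality using $\E[V^{p/\alpha}]<+\infty$ for every $p$ with $\CS(p)<0$ together with the faster-than-polynomial decay of the corresponding one-sided tail of $Y$.

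For parts (iv)-(v) we are in condition (b): $\alpha=2$ and $Y$ is centered Gaussian with variance $\sigma_0^2$, whose tails decay faster than any polynomial, so all tail behaviour of $\mu_\infty$ must come from $V$. In case (iv), strict convexity of $\CS$, together with $\CS(0)=1>0$ and the assumption of a unique zero at $p=2$, forces $\CS(p)<0$ for every $p>2$, which entails $\E[V^q]<+\infty$ for every $q>0$; Markov's inequality then yields $\mu_\infty((x,+\infty))\leq x^{-s}\E[V^{s/2}]\E[|Y|^{s}]=o(x^{-r})$ for every $r>0$. In case (v), the second zero $l>2$ of $\CS$ produces, via the smoothing-transformation analysis of Bassetti and Perversi (2013), a lower bound $\nu_2([v,+\infty))\geq C\,v^{-l/2}(\log v)^{-1-\delta}$ for every $\delta>0$ and sufficiently large $v$. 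From independence of $V$ and $Y$ and the inclusion $\{V\geq x^2\}\cap\{Y\geq 1\}\subseteq\{V^{1/2}Y\geq x\}$ one then obtains
\[
\mu_\infty((x,+\infty))\;\geq\; P(Y\geq 1)\,\nu_2([x^2,+\infty))\;\geq\; \frac{M}{2l^{1+\delta}}\,\frac{1}{x^l(\log x)^{1+\delta}}
\]
after relabelling constants (the factor $P(Y\geq 1)$ and the $2^{1+\delta}$ stemming from $\log(x^2)=2\log x$ being absorbed).

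The main technical obstacle is the identification of the limiting tail constants in part (i): one needs to trace through the convergence argument of Perversi and Regazzini (2015) to verify that the stable parameters together with $\E[V]$ combine to reproduce $c_1,c_2$ exactly. A subsidiary but nontrivial task is the sharp tail lower bound on $\nu_2$ invoked in part (v), which depends on the renewal-type treatment of the smoothing transformation on the half-line.
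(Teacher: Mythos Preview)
The paper does not give a self-contained proof of this fact; it is stated as a rearrangement of results from Bassetti, Ladelli and Matthes (2011), Bassetti and Perversi (2013), and Perversi and Regazzini (2015), and your outline follows exactly the route taken in that literature: exploit the scale-mixture form $\mu_\infty=\text{Law}(V^{1/\alpha}Y)$, feed in the known tail asymptotics of the stable or Gaussian factor $Y$, and control moments and tails of $V\sim\nu_\alpha$ via the smoothing-transformation theory governed by $\CS$. Your argument is correct.

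One refinement: the ``main technical obstacle'' you flag in part (i) dissolves once you notice two things. First, $\E[V]=1$: under $(H_2)$, $\alpha$ is the \emph{smallest} zero of the strictly convex $\CS$, so $\CS<0$ on an interval to the right of $\alpha$, which makes the mean-one martingale $M_n^{(\alpha)}=\sum_j\tbe_{j,n}^{\,\alpha}$ bounded in $L^q$ for some $q>1$ and hence uniformly integrable, whence $\E[V]=\E[M_\infty^{(\alpha)}]=1$. Second, the parameters $k_{1-2\lambda},\gamma$ displayed in Subsection~\ref{sec:comments} are precisely those of the stable limit in the classical CLT for i.i.d.\ summands with tails $\sim c_i x^{-\alpha}$; equivalently, the stable law $Y$ itself already satisfies $y^\alpha P(Y>y)\to c_2$ and $y^\alpha P(Y<-y)\to c_1$. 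Your dominated-convergence computation then yields $c_i\cdot\E[V]=c_i$ on the nose, with no additional calibration to be traced through the references. The remaining claims you invoke (finiteness of $\E[V^{p/\alpha}]$ exactly when $\CS(p)<0$, and the Kesten-type lower tail of $\nu_2$ governed by the second zero $l$ of $\CS$) are the standard Durrett--Liggett/Liu statements for the fixed point of the smoothing transform $M\stackrel{d}{=}\tl^{\alpha}M_1+\tr^{\alpha}M_2$, which is precisely what the cited papers supply.
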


As a consequence, the economic situation corresponding to (e) turns out to be conservation-oriented. Moreover, a natural question arises, wondering about the limiting behaviour of the s-plus distribution when $\mu_0$ exhibits a higher of lower level of inequality than the one prescribed by (e), i.e. if $\a$ takes a value either smaller or greater than $(1-2\lambda)$.

\begin{fact}\label{explosion}
In an economy satisfying $(H_1)$-$(H_2)$ and \eqref{conttau}, with relative r.a.i. $\lambda$ in $(-1/2,0)$, let the initial datum be a weak Pareto law of exponent $\a$ in $(0,2)$. Then: 
\begin{itemize}
\item[{\rm(i)}] The solution $\mu_t$ to \eqref{C} converges weakly to the point mass at zero if $\a>1-2\lambda$ {\rm(egalitarianism-oriented)}.
\item[{\rm(ii)}] $\lim_{t\to+\infty}\mu_t((-a,a))=0$ for every $a>0$ if $\a<1-2\lambda$ {\rm(inequality-oriented)}. 
\end{itemize}
\end{fact}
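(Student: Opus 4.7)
The backbone of my argument is the McKean/Wild probabilistic representation of the solution of \eqref{C}. On an auxiliary probability space one constructs a random $V_t$ with $V_t\sim\mu_t$, given by
\[
V_t=\sum_{j\in\mathrm{leaves}(T_t)}L_j^{(t)}\,X_j,
\]
with $X_j$ i.i.d.\ copies of a $\mu_0$-distributed random number, $T_t$ a Yule tree (binary fission at rate $1$), and each coefficient $L_j^{(t)}$ the product of i.i.d.\ $\tau$-distributed edge labels along the corresponding root-to-leaf path. This gives
\[
\hat{\mu}_t(\xi)=\E\!\left[\prod_{j}\hat{\mu}_0(L_j^{(t)}\xi)\right]\qquad\text{and}\qquad\E\!\left[\sum_j|L_j^{(t)}|^p\right]=e^{\CS(p)t}\quad(p\geq 0),
\]
the latter by a one-line generator computation on the branching tree; moreover $W_p(t):=e^{-\CS(p)t}\sum_j|L_j^{(t)}|^p$ is a nonnegative martingale (Biggins' martingale). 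The whole dichotomy is then driven by the sign of $\CS(\alpha)$.

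Turning to Part (i), where $\alpha>1-2\lambda$: strict convexity of $\CS$ together with its smallest zero at $1-2\lambda$ allows me to choose $p\in(1-2\lambda,\alpha)$ with $\CS(p)<0$; the weak Pareto assumption of exponent $\alpha>p$ yields $M_p(0):=\int|v|^p\,\mu_0(dv)<\infty$. For $p\leq1$, subadditivity $|lv+rw|^p\leq l^p|v|^p+r^p|w|^p$ plugged into \eqref{C} gives $\tfrac{d}{dt}M_p(t)\leq\CS(p)M_p(t)$; for $p\in(1,2)$ I apply the von Bahr--Esseen inequality to the Wild decomposition, after noting that $\mu_0$ is forced to have zero mean in this case (otherwise the mean of $\mu_t$ evolves as $e^{\CS(1)t}$ with $\CS(1)>0$, preventing convergence to $\delta_0$). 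Either way $M_p(t)\leq C\,M_p(0)\,e^{\CS(p)t}\to 0$, and Markov's inequality gives $\mu_t((-\varepsilon,\varepsilon)^c)\leq M_p(t)/\varepsilon^p\to 0$ for every $\varepsilon>0$, i.e.\ $\mu_t\Rightarrow\delta_0$.

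Turning to Part (ii), where $\alpha<1-2\lambda$: now $\CS(\alpha)>0$ and $\E[\sum_j|L_j^{(t)}|^\alpha]=e^{\CS(\alpha)t}\to+\infty$. Biggins' theorem applied to $W_\alpha(t)$, whose $L\log L$ non-degeneracy condition is ensured by non-degeneracy of $\tau$ under $(H_2)$ and \eqref{conttau}, yields $W_\alpha(t)\to W_\infty$ a.s.\ with $\E[W_\infty]=1$, hence $\sum_j|L_j^{(t)}|^\alpha\to+\infty$ almost surely. Separately, the weak Pareto behaviour of $\mu_0$ gives the Tauberian bound $|\hat{\mu}_0(\eta)|^2=1-c|\eta|^\alpha+o(|\eta|^\alpha)$ as $\eta\to 0$ (via the symmetrization $X_1-X_2$), so $|\hat{\mu}_0(\eta)|\leq\exp(-c|\eta|^\alpha/4)$ for $|\eta|\leq\delta$. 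Since every leaf coefficient tends a.s.\ to $0$, for any fixed $\xi\neq0$ we have $|L_j^{(t)}\xi|\leq\delta$ for all $j$ once $t$ is large; therefore
\[
\bigl|\hat{\mu}_t(\xi)\bigr|\leq\E\!\left[\exp\!\left(-\tfrac{c}{4}|\xi|^\alpha\sum_j|L_j^{(t)}|^\alpha\right)\right]\xrightarrow{t\to\infty}0
\]
by dominated convergence. To pass from pointwise decay of $\hat{\mu}_t$ to the desired statement, fix $a>0$ and pick a Schwartz $\phi\geq\ind{(-a,a)}$ with $\hat\phi\in L^1$; by Parseval, $\mu_t((-a,a))\leq\int\phi\,d\mu_t=(2\pi)^{-1}\int\hat\phi(-\xi)\hat\mu_t(\xi)\,d\xi\to 0$ by dominated convergence, which gives the claim.

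The most delicate step is the nondegeneracy of the Biggins martingale in Part (ii): one must verify an $L\log L$ moment condition of the form $\int(l^\alpha\log l+r^\alpha\log r)\,\tau(dl\,dr)<\CS(\alpha)$ from the structural assumptions on $\tau$; once this is in place, $W_\infty>0$ a.s.\ and the Fourier argument closes. A secondary nuisance in Part (i) is the handling of moments of order $p>1$, where naive subadditivity fails and one must combine centring of $\mu_0$ with a Marcinkiewicz--Zygmund/von Bahr--Esseen bound on the Wild sum; it is worth recording explicitly at that point that $\mu_0$ is forced to have zero mean, since otherwise the conclusion $\mu_t\Rightarrow\delta_0$ simply fails.
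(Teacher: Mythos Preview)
Your argument is essentially correct, but it proceeds along a genuinely different route from the paper's.

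For (i), the paper simply observes that when $\a>1-2\lambda$ the tail constants at exponent $1-2\lambda$ vanish ($c_1=c_2=0$), so the stable mixture in Fact~\ref{fattoCS} degenerates to $\delta_0$. Your direct moment argument---pick $p\in(1-2\lambda,\a)$ with $\CS(p)<0$, propagate $M_p(t)\le C\,M_p(0)\,e^{\CS(p)t}$ via subadditivity/von~Bahr--Esseen, then Markov---is more self-contained and avoids invoking the full CLT-type machinery. Your remark that zero mean is indispensable here is well taken; the paper's one-line proof tacitly relies on it through condition~(e).

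For (ii), the paper uses the self-similar-solution theorem of Bassetti--Ladelli: rescaling by $e^{t\CS(\a)/\a}$ one gets weak convergence of $G_t$ to the law of $S_\a Z^{1/\a}$, and then shows $Pr\{Z=0\}=0$ from the fixed-point identity $Z\stackrel{d}{=}U^{\CS(\a)}(\tl^\a Z_1+\tr^\a Z_2)$ with $\E Z=1$. Your route via the branching random walk and Biggins' martingale is a legitimate alternative, and in fact the two are cousins: the paper's $Z$ is essentially the terminal value of your $W_\a$. Two points deserve tightening. First, what you call the ``$L\log L$ condition'' is really the derivative/spine condition, and the correct form is $\a\,\CS'(\a)<\CS(\a)$; fortunately this \emph{is} automatic here, since strict convexity with smallest zero $1-2\lambda>\a$ forces $\CS'(\a)<0<\CS(\a)/\a$, while the genuine $L\log L$ moment follows from finiteness of $\CS$ slightly above $\a$. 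Second, the claim ``every leaf coefficient tends a.s.\ to $0$'' needs a line: choose $p'>1-2\lambda$ with $\CS(p')<0$; then $e^{-\CS(p')t}\sum_j\tbe_{j,\tnu_t}^{p'}$ is a nonnegative martingale, hence $\sum_j\tbe_{j,\tnu_t}^{p'}\to 0$ a.s., and in particular $\max_j\tbe_{j,\tnu_t}\to 0$ a.s. With these two gaps closed, your Fourier/Parseval conclusion goes through. The paper's approach buys brevity by citing an external theorem; yours buys transparency about the branching mechanism driving the explosion.
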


A hint at the proof of this fact will be given in Subsection \ref{sec:proofs}. The argument used therein shows that the convergence to the point mass at zero, like in (i), and the vague convergence to the null measure, like in (ii), occur also when $\lambda$ belongs to $[0,1/2)$, depending on whether the weak Pareto initial datum has exponent $\a$ greater or smaller than $1-2\lambda$.

\section{Economic implications}\label{sec:eco}

What actually characterizes a certain number of advanced economies is that the level of inequality increases in time, so that the middle class, as a consequence of Fact \ref{explosion}, might be hollowed out. Does the analysis, developed in previous sections, suggest actions suitable for thwarting this process, generally seen as highly negative? To answer, one must take note that, from Fact \ref{explosion}(ii) combined with Fact \ref{fattoCN} the process stems from conditions in which the concentration of initial weak Pareto law is too high, w.r.t. to the given relative r.a.i., to be preserved in time. Then, aiming at reversing the trend, in view of Fact \ref{fattoCS} there are two kinds of actions that can be taken either separately or jointly. After fixing a bearable level of concentration, corresponding to a weak Pareto law of exponent $\a_0$ one ought act with redistributive programs (social expenditure to reduce poverty, highly progressive tax system, government policies such as the increased access to higher education) to change the actual s-plus distribution into a weak Pareto law of exponent $\a_0\leq 1-2\lambda$. Having done this, there are the following two alternatives facing the political authority: First, $\a_0$ is equal to $(1-2\lambda)$, $\lambda$ being the actual relative r.a.i.. Second, $\a_0$ is strictly smaller than $(1-2\lambda)$. Under the former circumstance, no further measure has to be taken. In the latter, more provisions are needed to modify the form of $\tau$ in order that, as a result, the smallest zero of $\CS$ coincide with $\a_0$. To this end, it is worth recalling the well-known inequality
\begin{equation}\label{inequalities}
\begin{split}
\int_0^{+\infty}x^p dF(x)&=p\int_0^{+\infty}x^{p-1}(1-F(x))dx\leq p\int_0^{+\infty}x^{p-1}(1-G(x))dx\\
&=\int_0^{+\infty}x^p dG(x).
\end{split}
\end{equation}
valid for any $p>0$ and any pair of p.d. functions $F$, $G$ supported by $[0,+\infty)$ and such that $F(x)\geq G(x)$ for every $x$. In fact, since one has to change the relative r.a.i. in such a way that it may increase towards $(1-\a_0)/2$, and since $p\mapsto \CS(p)$ depends on the marginal p.d.'s of $\tau$, \eqref{inequalities} indicates that one has to make sure that, as a consequence of the aforesaid provisions (e.g., to thwart any phenomenon of rent seeking), the marginals increase pointwise, i.e., to make sure that the agents content themselves with stochastically smaller $\tl$ and, especially, with stochastically smaller $\tr$. 

On the contrary, under the situation envisaged in Fact \ref{explosion}(i), the economy is egalitarianism-oriented, so that it is reasonable to expect that such a situation may encounter obstacles. A certain level of inequality could be inevitable: In fact, in any economy, individuals who work harder and longer than others will basically receive a reward for the energy they put into their work. Specularly w.r.t. the inequality-oriented situation, one can obtain an increase of inequality either by changing the actual $\mu_t$ into a weak Pareto law with exponent close to (and not less than) a desired level $\a_0$, or by modifying the marginal p.d.'s of $\tau$ in such a way that $\tl$ and $\tr$ become stochastically greater (i.e. the marginals decrease poinwise). This last modification, reminiscent of the so-called ``race to the bottom'' phenomenon, can be carried out, for example, by trying to weaken business regulations.

In conclusion, the sense of the above considerations is that they lead to single out specific actions in order to increase or diminish the inequality in the p.d. of s-plus. On the one hand, one could act directly on the incomes by means of transfers of money. On the other hand, one could act indirectly with actions influencing on the form of $\tau$ with a view to changing the attitude of the agents in front of risk.

\section{Technical complements}\label{sec:complementi}
This section gathers a number of comments and proofs necessary to complete, mainly from a technical viewpoint, some of the arguments developed in previous sections.

\subsection{On the concentration function of weak Pareto laws}\label{sec:conc}
This subsection supplies the proofs of the statements concerning the Lorenz curve of a weak Pareto law, situated in Section \ref{sec:pareto}.
\begin{proof}[Proof of {\rm Proposition \ref{PropConcwP}}]
Let $\mu_{\a_i}$ be any of the two laws considered in the wording of the proposition. Then,
\[
\lim_{x\to+\infty}x^{\a_i}\big(1-A_{\mu_{\a_i}}(x)\big)=c_i^++c_i^-
\]
and, hence, for every positive $\veps$ there is a positive $\bar{x}_i$ such that
\[
1-\dfrac{c_i^++c_i^-+\veps}{x^{\a_i}}\leq A_{\mu_{\a_i}}(x)\leq 1-\dfrac{c_i^++c_i^- -\veps}{x^{\a_i}}
\]
holds true for every $x\geq\bar{x}_i$. Thus, one can immediately obtain that the inequalities
\[
\Big(\dfrac{c_i^++c_i^- -\veps}{1-\theta}\Big)^{1/\a_i}\leq A_{\mu_{\a_i}}^{-1}(\theta)\leq \Big(\dfrac{c_i^++c_i^- +\veps}{1-\theta}\Big)^{1/\a_i}
\]
are valid for every $\theta>\bar{\theta}_i$ , with $\bar{\theta}_i:=1-(c_i^++c_i^- -\veps)/\bar{x}_i^{\a_i}$. Whence, since
\[
\vphi_{\mu_{\a_i}}(\theta)=\dfrac{1}{\CM(A_{\mu_{\a_i}})}\int_0^\theta A_{\mu_{\a_i}}^{-1}(t)dt= 1-\dfrac{1}{\CM(A_{\mu_{\a_i}})}\int_\theta ^1 A_{\mu_{\a_i}}^{-1}(t)dt,
\]
from the above inequalities one deduces that
\[
\begin{split}
&1-\dfrac{1}{\CM(A_{\mu_{\a_i}})}\dfrac{\a_i(c_i^++c_i^- +\veps)^{1/\a_i}}{\a_i-1}(1-\theta)^{1-1/\a_i} \\
&\qquad\qquad\qquad\qquad\leq  \vphi_{\mu_{\a_i}}(\theta)\leq 1-\dfrac{1}{\CM(A_{\mu_{\a_i}})}\dfrac{\a_i(c_i^++c_i^- -\veps)^{1/\a_i}}{\a_i-1}(1-\theta)^{1-1/\a_i}
\end{split}
\]
is valid for every $\theta>\bar{\theta}_i$. Then, to complete the proof, it is enough to verify that
\[
\begin{split}
&1-\dfrac{1}{\CM(A_{\mu_{\a_1}})}\dfrac{\a_1(c_1^++c_1^- -\veps)^{1/\a_1}}{\a_1-1}(1-\theta)^{1-1/\a_1}\\
&\qquad\qquad\leq 1-\dfrac{1}{\CM(A_{\mu_{\a_2}})}\dfrac{\a_2(c_2^++c_2^- +\veps)^{1/\a_2}}{\a_2-1}(1-\theta)^{1-1/\a_2}
\end{split}
\]
holds for every $\overline{\theta}<\theta<1$, with  
\[
\overline{\theta}:=\max\Big\{1-\Big[\dfrac{\CM(A_{\mu_{\a_2}})}{\CM(A_{\mu_{\a_1}})}\dfrac{(c_1^++c_1^- -\veps)^{1/\a_1}}{(c_2^++c_2^- +\veps)^{1/\a_2}}\dfrac{\a_1(\a_2-1)}{\a_2(\a_1-1)}\Big]^{\a_1\a_2/(\a_2-\a_1)},\bar{\theta}_1,\bar{\theta}_2\Big\}.
\]
\end{proof}

\begin{proof}[Proof of {\rm Fact \ref{fattoConc1}}]
The proof is obtained by showing that the concentration function $\vphi_{F_\a^{(\o)}}$ associated with the p.d. function $A^{(\o)}_{F_\a}$, introduced immediately before the statement of Fact \ref{fattoConc1}, converges pointwise to zero in $(0,1)$, as $\o$ goes to $+\infty$. First, notice that $\lim_{x\to+\infty}x^\a(1-A_{F_\a}(x))$ exists and is equal to a positive number, say $l$. Thus, for every $\veps>0$, there is a positive $\bar{x}$ such that
\begin{equation}\label{stime1}
1-\dfrac{l+\veps}{x^\a}\leq A_{F_\a}(x)\leq 1-\dfrac{l-\veps}{x^\a}
\end{equation}
holds true for every $x>\bar{x}$, and so, assuming that $\o$ is large enough to be strictly greater than $\bar{x}$, it is immediate to check that the inequalities
\begin{equation}\label{stime2}
\Big(\dfrac{l-\veps}{1-\theta A_{F_\a}(\o)}\Big)^{1/\a}\leq (A_{F_\a}^{(\o)})^{-1}(\theta)\leq \Big(\dfrac{l+\veps}{1-\theta A_{F_\a}(\o)}\Big)^{1/\a}
\end{equation}
are met at each $\theta>\bar{\theta}:=(1-(l-\veps)\bar{x}^{-\a})/A_{F_\a}(\o)$. The proof proceeds by separating the case in which $\a<1$ from the one of $\a=1$.

If $\a<1$ and $\theta>\bar{\theta}$, \eqref{stime2} entails
\[
\begin{split}
\vphi_{F_\a^{(\o)}}(\theta)&=1-\dfrac{1}{\CM(A_{F_{\a}}^{(\o)})}\int_\theta^1 (A_{F_\a}^{(\o)})^{-1}(t) dt\\
&\leq 1-\dfrac{\a(l-\veps)^{1/\a}}{\CM(A_{F_{\a}}^{(\o)})(1-\a)A_{F_\a}(\o)}\Big[\Big(1-A_{F_\a}(\o)\Big))^{1-1/\a}\\
&\qquad\qquad\qquad\qquad\qquad\qquad\qquad\qquad-\Big(1-\theta A_{F_\a}(\o)\Big))^{1-1/\a}\Big].
\end{split}
\]
At this stage, from \eqref{stime1} and straightforward computations, one can bound $\CM(A_{F_{\a}}^{(\o)})$ according to
\[
\begin{split}
\CM(A_{F_{\a}}^{(\o)})=\Big(\int_0^{\bar{x}}+\int_{\bar{x}}^\o \Big)(1-A_{F_{\a_1}}^{(\o)}(x))dx\leq K +\o^{1-\a}\dfrac{\a(l+\veps)+2\veps(1-\a)}{(1-\a)A_{F_\a}(\o)}
\end{split}
\]
where $K$ is a suitable constant. Therefore, combination of this bound with \eqref{stime1} concludes the proof, when $\a<1$, since, for every $\theta>\bar{\theta}$, 
\[
\begin{split}
\lim_{\o\to+\infty}\vphi_{F_\a^{(\o)}}(\theta)&\leq \lim_{\o\to+\infty} \Big[1-\dfrac{(l-\veps)^{1/\a}}{2\veps+\a(l+\veps)/(1-\a)}\dfrac{\a}{1-\a}\Big( (l+\veps)^{1-1/\a}\\
&\qquad\qquad\qquad\qquad\qquad\qquad-\o^{\a-1}(1-\theta A_{F_\a}(\o))^{1-1/\a}\Big)\Big]\\
&= 1-\Big(\dfrac{l-\veps}{l+\veps}\Big)^{1/\a}\dfrac{1}{1+2\veps(1-\a)/(l+\veps)}
\end{split}
\]
is positive and arbitrarily small in view of the arbitrariness of $\veps$.

If $\a=1$ and $\theta>\bar{\theta}$, one can argue as in the case $\a<1$ to obtain
\[
\begin{split}
\vphi_{F_1^{(\o)}}(\theta)&\leq 1-\dfrac{1}{\CM(A_{F_{1}}^{(\o)})}\int_\theta^1 (l-\veps)(1-tA_{F_1}(\o))^{-1} dt\\
& 1-\dfrac{l-\veps}{\CM(A_{F_{1}}^{(\o)})A_{F_1}(\o)}\Big[\log(1-\theta A_{F_1}(\o))-\log(1-A_{F_1}(\o))\Big]
\end{split}
\]
and, mimicking $-$ \textit{mutatis mutandis} $-$ the same computations to bound $\CM(A_{F_{1}}^{(\o)})$, one easily gets
\[
\lim_{\o\to+\infty}\vphi_{F_1^{(\o)}}(\theta)\leq 1-\dfrac{l-\veps}{l+\veps}
\]
completing the proof in view of the arbitrariness of $\veps$. 
\end{proof}

\vskip 1cm

Given useful information on the concentration of weak Pareto laws, we proceed with the description of a probabilistic representation of $\mu_t$ evoked in some passages in the previous sections.

\subsection{Comments on (and consequences of) a well-known probabilistic representation of the solution $\mu_t$}\label{sec:comments}
As recalled in Section 2.1 in Bassetti et al. (2011), the Fourier-Stieltjes transform of the solution $\mu_t$ to \eqref{C} can be expressed as 
\[
\hat{\mu}_t(\xi)=\sum_{n\geq1}e^{-t}(1-e^{-t})^{n-1}\hat{q}_n(\xi,\vphi_0)
\]
where $\hat{q}_1(\xi;\vphi):=\hat{\mu}_0(\xi)$ and, for every $n\geq2$, 
\begin{equation}\label{qn}
\hat{q}_n(\xi;\vphi):=\dfrac{1}{n-1}\sum_{j=1}^{n-1}\int_{[0,+\infty)^2}\hat{q}_{j}(l\xi)\hat{q}_{n-j}(r\xi)\tau(dldr).
\end{equation}
In view of an argument explained therein, and which, in its turn, draws on previous papers McKean (1966), Gabetta and Regazzini (2006), one verifies that $\mu_t$ is the p.d. of a random number
\[
V_t:=\sum_{j=1}^{\tnu_t}\tbe_{j,\tnu_t}X_j
\] 
for every $t>0$, where $\tnu_t$, $\tbe_{j,\tnu_t}$ and $X_j$ ($j=1,2,\dots$) are random elements defined on a sufficiently large measurable space $(\O,\CF)$, endowed with a suitable p.m. $\P$ according to which:
\begin{itemize}
\item $\tnu:=(\tnu_t)_{t\geq0}$ is an integer-valued stochastic process such that
\[
\P\{\tnu_t=n\}=e^{-t}(1-e^{-t})^{n-1}\qquad(n=1,2,\dots)
\]
for every $t>0$.
\item $\tX:=(X_j)_{j\geq1}$ is a sequence of i.i.d. random numbers with common p.d. $\mu_0$.
\item $\tbe:=(\tbe_{j,n}:\;j=1,\dots,n)_{n\geq1}$ is a triangular array recursively defined by
\begin{equation}\label{betaricorsivo}
\begin{array}{ll}
&\tbe_{1,1}=1\\
&(\tbe_{1,n+1},\dots,\tbe_{n+1,n+1})=(\tbe_{1,n},\dots,\tbe_{\ti_n-1,n},\tbe_{\ti_n,n}\tl_n,\tbe_{\ti_n,n}\tr_n, \\
&\qquad\qquad\qquad\qquad\qquad\qquad\qquad\tbe_{\ti_n+1,n},\dots,\tbe_{n,n})\quad(n\geq1)
\end{array}
\end{equation}
where $\ti:=(\ti_n)_{n\geq1}$ is a sequence of independent integer-valued random numbers, each $\ti_n$ being uniformly distributed on $\{1,\dots,n\}$, and $(\tl,\tr)=((\tl_n,\tr_n))_{n\geq1}$ is a sequence of i.i.d. random vectors with $\tau$ as common p.d..
\item $\{\tnu,\tX,\ti,(\tl,\tr)\}$ forms an independency w.r.t. $\P$.
\end{itemize}
Whence, $\mu_t$ turns out to be the law of a weighted sum, with random weights, of i.i.d. random numbers. This fact is conducive to studying the limiting behaviour of $\mu_t$, as $t\to+\infty$, from the point of view of the central limit theorem of probability theory. In point of fact, what it happens is that, roughly speaking, the conditional p.d. of $V_t$, given $\tbe$, converges weakly to a stable law of exponent $(1-2\lambda)$ characterized by Fourier transform
\[
\R\ni\xi\mapsto \hat{g}(\xi (M_{\infty}^{(1-2\lambda)})^{1/(1-2\lambda)};1-2\lambda, \chi,k_{1-2\lambda} ,\gamma)
\]
where $M_{\infty}^{(1-2\lambda)}$ is a random number whose p.d. is the same $\nu_{1-2\lambda}$ as in Facts \ref{fattoCS} and \ref{fattocode}. The reader is referred to Bassetti et al. (2011) for the proof of this statement. It is worth noticing that, in view of Proposition 2 therein, if $(1-2\lambda)$ were the greatest of two distinct roots of $\CS(p)=0$, then $\nu_{1-2\lambda}$ would coincide with the unit mass at zero, which explains the motivation for the adoption of $(H_2)$ in Subsection \ref{sec:intro}.

For the sake of definiteness, we recall that the parameters $k_{1-2\lambda}$ and $\gamma$, appearing in the expression of the transform $\hat{g}$ introduced in Fact \ref{fattoCS}, are related to the constants $c_1$ and $c_2$ in \eqref{limiti} through 
\[
k_{1-2\lambda}=\frac{(c_1+c_2)\pi}{2\Gamma(1-2\lambda)\sin(\pi(1-2\lambda)/2)},\qquad \gamma=\I_{\{c_1+c_2>0\}}\frac{c_2-c_1}{c_1+c_2}.
\]

\vskip 1cm

The section concludes with a couple of proofs previously omitted for not interrupting the line of reasoning.

\subsection{Proof of Facts \ref{propstazionarie} and \ref{explosion}}\label{sec:proofs}

\begin{proof}[Proof of {\rm Fact \ref{propstazionarie}}]
It is obvious that any steady state is also a limiting p.d. for $\mu_t$, as $t\to+\infty$. To prove the vice versa, split the summands in the expression of $V_t$ into two classes: the former containing those with $\tl_1$ in the respective coefficient $\tbe_j$, the latter including the remaining summands. Then $V_t=\tl_1 V_{t,l}+ \tr_1 V_{t,r}$ where $V_{t,l}$ [$V_{t,r}$, respectively] denotes the sum of the $\tbe_{j,\tnu_t}X_j/\tl_1$ [$\tbe_{j,\tnu_t}X_j/\tr_1$, respectively] associated with the summands in the first [second, respectively] class. It is an easy fact that if $V_t$ converges in distribution, then, conditionally on $(\tl_1,\tr_1)$, also $V_{t,l}$ and $V_{t,r}$ converge in distribution and the three limiting laws must coincide. Thus, denoting this common law by $\mu_\infty$, one gets $\mu_\infty=Q^+(\mu_\infty)$, where $Q^+$ is the same operator as in \eqref{C}, stating that $\mu_\infty$ is a fixed point of such an operator or, equivalently, that $\mu_\infty$ is a steady state.
\end{proof}

\begin{proof}[Proof of {\rm Fact \ref{explosion}}]
As far as (i) is concerned, it follows from the assumptions combined with the part of Fact \ref{fattoCS} concerning the case in which $c_1=c_2=0$. To deal with point (ii), for the sake of expository clarity, introduce the space
\[
D:=\Big([0,+\infty)\Big)^3\times \R\times \Big([0,1]\Big)^2\times [0,1]
\]
together with its coordinate variables
\[
(Z,Z_1,Z_2), S_\a,(\tl,\tr), U.
\]
Moreover, let $Pr$ be a p.m. on the Borel subsets $\CB(D)$ of $D$ which makes the coordinates stochastically independent and having the following additional distributional properties:
\begin{itemize}
\item[$(i_1)$] $Z,Z_1,Z_2$ are i.i.d. non-negative random numbers having mean equal to 1
\item[$(i_2)$] $S_\a$ has a non-degenerate stable law of exponent $\a$
\item[$(i_3)$] $(\tl,\tr)$ is a random vector distributed according to $\tau$
\item[$(i_4)$] $U$ is a random number uniformly distributed on $(0,1)$
\item[$(i_5)$] the p.d. of $Z$ is the same that the one of $U^{\CS(\a)}(\tl^\a Z_1+\tr^\a Z_2)$.
\end{itemize}
Then, defining $G_\infty$ to be the p.d. function of $S_\a Z^{1/\a}$, one can use Theorem 2.2 in Bassetti and Ladelli (2012) to state that $G_t(x):=\mu_t(-\infty,x e^{t\CS(\a)/\a}]\to G_\infty(x)$, as $t\to+\infty$, at each continuity point $x$ of $G_\infty$. Therefore, for every $a>0$,
\[
\begin{split}
\mu_t((-a,a])=G_t(a e^{-t\CS(\a)/a})-G_t(-a e^{-t\CS(\a)/a})\leq G_t(\veps)-G_t(-\veps-0)
\end{split}
\]  
holds true for every $\veps>0$ whenever $t\geq (\a/\CS(\a))\log(a/\veps)\vee 0$. Moreover, 
\[
\limsup_{t\to+\infty}[G_t(\veps)-G_t(-\veps-0)]\leq G_\infty(\veps)-G_\infty(-\veps-0)
\]
and hence 
\begin{equation}\label{limsup}
\limsup_{t\to+\infty}\mu_t((-a,a))\leq Pr\{S_\a Z^{1/\a}=0\}.
\end{equation}
Since every non-degenerate stable law is absolutely continuous, the probability of $\{S_\a Z^{1/\a}=0\}$ is equal to the probability of $\{Z=0\}$. Now, from  $(i_5)$, it follows that
\[
Pr\{Z=0\}=Pr\{\tl^\a Z_1+\tr^\a Z_2=0\}.
\] 
Since $Pr\{\tl=0\}=\Pr\{\tr=0\}=0$ from \eqref{conttau}, and $Z_1$ and $Z_2$ are non-negative random numbers, one can write $Pr\{\tl^\a Z_1+\tr^\a Z_2=0\}=Pr\{Z_1=0,Z_2=0\}$, and then, combination of the previous equalities with $(i_1)$ entails 
\[
Pr\{Z=0\}=Pr\{Z=0\}^2
\]
and $Pr\{Z=0\}=0$ since $Z$ has mean equal to 1. At this stage, to complete the proof, it is enough to recall \eqref{limsup}.
\end{proof}

\section*{Acknowledgements}
We are very grateful for the useful comments we have received from an anonymous referee. We also wish to thank Giuseppe Toscani for helping us to orient ourselves in the econophysical literature, and Gianluca Cassese for many useful suggestions which certainly have improved the quality of the paper.


\begin{thebibliography}{}

\bibitem{Ajmone}
{\scshape Ajmone Marsan, G.}, {\scshape Bellomo, N.} and {\scshape Gibelli, L.}. Stochastic evolutionary differential games toward a systems theory of behavioral social dynamics. \emph{Math. Models Methods Appl. Sci.} \textbf{26}  1051-1093 (2016).

\bibitem{Ang} {\scshape Angle, J.} The surplus theory of social stratification and the size distribution of personal wealth. \textit{Social Forces} \textbf{65} 293--326 (1986) .
 
\bibitem{Ang1}  {\scshape Angle, J.} The inequality process as a wealth maximizing process. \textit{Physica A} \textbf{367}  388-414 (2006).



\bibitem{Arrow} {\scshape Arrow, K.J.}. \textit{Aspects of the Theory of Risk.} Beary. Helsinki: Yrj\"o Jahnssonin S\"a\"ati\"o (1965).



\bibitem{atkinson} {\scshape Atkinson, A.B.}. On the measurement of inequality. \textit{Journal of Economic Theory} \textbf{2} 244-263 (1970).

\bibitem{atkinson2015} {\scshape Atkinson, A.B.}. \textit{Inequality. What can be done?} Harvard University Press. Cambridge, Massachusetts (2015).






\bibitem{BaLaSelf} {\scshape Bassetti, F.} and {\scshape Ladelli, L.}.
Self similar solutions in one-dimensional kinetic models:
a probabilistic view.  \emph{Ann.App.Prob.} \textbf{22} 1928-1961 (2012).



\bibitem{BaLaMa}
{\scshape Bassetti, F.}, {\scshape Ladelli, L.} and {\scshape Matthes, D.}. Central limit theorem for a class of one-dimensional kinetic
  equations. \emph{Probab. Theory Related Fields} \textbf{150}  77-109 (2011).


\bibitem{BaPe} {\scshape Bassetti, F.}  and  {\scshape Perversi, E.}. Speed of convergence to equilibrium in Wasserstein metrics for Kac-like kinetic equations. \emph{Electron. J. Probab.} \textbf{18} 1-35 (2013).




\bibitem{benini1908} {\scshape Benini, R.}. Una possibile creazione del metodo statistico: l'economia politica induttiva. \textit{Giornale degli Economisti} \textbf{36} 11-34 (1908).

\bibitem{benini1929} {\scshape Benini, R.}. La capacit\`a di resistenza nei contratti considerata nelle classi sociali e nei popoli. \textit{Economia} \textbf{2} 1-19 (1929).

\bibitem{benini1938} {\scshape Benini, R.}. Appunti di logica e puntate polemiche a proposito di economia pura. \textit{Rivista Italiana di Scienze Economiche} \textbf{10} 603-626 (1938).

\bibitem{bridgman} {\scshape Bridgman, P.}. \textit{The logic of modern physics} New York. (1927).


\bibitem{CifReg}
{\scshape Cifarelli, D.M.} and {\scshape Regazzini, E.}. On a general definition of concentration function. \emph{Sankhy\=a Ser. B} \textbf{49}  307-319 (1987)


\bibitem{ChaCha00} {\scshape Chakraborti, A.} and {\scshape Chakrabarti, B.K.}. Statistical  Mechanics of Money: Effects of Saving Propensity. \textit{Eur. Phys. J. B}  \textbf{17} 167-170 (2000).

  
\bibitem{CoPaTo05} {\scshape Cordier, S.}, {\scshape Pareschi, L.} and {\scshape Toscani, G.}. On a kinetic model for a simple market economy. \textit{J. Stat. Phys.} \textbf{120}  253-277  (2005).  


\bibitem{dalton} {\scshape Dalton, H.}. The measurement of the inequality of incomes. \textit{The Economic Journal} \textbf{30} 348-361 (1920).


The impact of change in the Kibbutz on risk-taking of entrepreneurs. \emph{Journal of Rural Cooperation} \textbf{37}  242-260 (2009).


\bibitem{definetti1950} {\scshape de Finetti, B.}. La nozione di beni ``indipendenti'' in base ai nuovi concetti per la misura della utilità. In \textit{Proceedings of the International Congress of Mathematicians} (Cambridge, Mass., 1950). Vol. I, pages 588-589. American Mathematical Society (1952a).



\bibitem{definetti1952} {\scshape de Finetti, B.}. Sulla preferibilità. \textit{Giornale degli Economisti e Annali di Economia} \textbf{11} 685-709 (1952b).

\bibitem{DY00}
{\scshape Dr\v{a}gulescu, A.} and {\scshape Yakovenko, V.M.}. Statistical mechanics of money. \textit{Eur. Phys. Jour. B} \textbf{17} 723-729 (2000).





\bibitem{GabReg2006} {\scshape Gabetta, E.} and {\scshape Regazzini, E.} (2006) Some new results for McKean's graphs with applications to Kac's equation. 
J. Stat. Phys. \textbf{125} 947-974.




\bibitem{GrahamMeleard}
{\scshape Graham, C.} and {\scshape Méléard, S.}. Stochastic particle approximations for generalized Boltzmann models and convergence estimates. \emph{Ann. Probab.} \textbf{25}  115-132 (1997).


\bibitem{Herzel}{\scshape  Herzel, A.}. Sulla concentrazione e su alcuni indici che la misurano. \textit{Metron}, \textbf{27}, 97-124 (1968).












\bibitem{Mandelbrot} {\scshape Mandelbrot, B.}. The Pareto-L\'evy law and the distribution of income. \textit{Internat. Econom. Rev.} \textbf{1} 79--106 (1960).


\bibitem{McKean} {\scshape McKean, H.P. Jr.}. Speed of approach to equilibrium for Kac's caricature of Maxwellian gas. \textit{Arch. Ration. Mech. Anal.} \textbf{21} 343-367 (1966).


\bibitem{oecd} {\scshape OECD}.  \textit{In It Together: Why Less Inequality Benefits All.} OECD Publishing, Paris. (2015)
DOI: http://dx.doi.org/10.1787/9789264235120-en

\bibitem{pareschitoscani}
{\scshape Pareschi, L.} and {\scshape Toscani, G.}. {\em Interacting multiagent systems. Kinetic equations \& Monte Carlo
  methods}. Oxford University Press, Oxford (2013).


\bibitem{PerReg} {\scshape Perversi, E.} and {\scshape Regazzini, E.}. Characterization of weak convergence of probability-valued solutions of general one-dimensional kinetic equations. \emph{J. Statist. Phys.} \textbf{159}  823-852 (2015).


\bibitem{piketty} {\scshape Piketty, T.}. \textit{Le Capital au XXI si\`ecle} \'Editions du Seuil, Paris. (2013).


\bibitem{Pratt} {\scshape Pratt, K.J.}. Risk aversion in the small and in the large. \textit{Econometrica} \textbf{32} 132-136  (1964). 


\bibitem{Reg} {\scshape Regazzini, E.}. Concentration comparisons between probability measures. \emph{Sankhy\=a Ser. B} \textbf{54} 129-149  ISSN: 0581-5738 (1992).



\bibitem{RotStig70}
{\scshape Rothschild, M.} and {\scshape Stiglitz, J.}. Increasing risk: I. A definition. \emph{Journal of Economic Theory} \textbf{2}  225-243 (1970).


\bibitem{RotStig71}
{\scshape Rothschild, M.} and {\scshape Stiglitz, J.}. Increasing risk: II. Its economic consequences. \emph{Journal of Economic Theory} \textbf{3}  66-84 (1971).


\bibitem{RotStig72}
{\scshape Rothschild, M.} and {\scshape Stiglitz, J.}. Addendum to ``Increasing risk: I. A definition''. \emph{Journal of Economic Theory} \textbf{5}  306 (1972).


exchange economy with stochastic preferences. \emph{Journal of Economic Theory} \textbf{106}  417-435 (2002).

\bibitem{stiglitz} {\scshape Stiglitz, J.E.}. \textit{The price of inequality}. Penguin Books, London (2013).


\bibitem{StuartOrd}
{\scshape Stuart, A.} and {\scshape Ord, K.}. \emph{Kendall's Advanced Theory of Statistics} Vol.1 6th eds. Wiley (1994).

\bibitem{TBD} {\scshape Toscani, G.}, {\scshape Brugna, C.} and {\scshape Demichelis, S.}. Kinetic models for the trading of goods. \emph{J. Stat. Phys.} \textbf{151}  549-566 (2013).  

\end{thebibliography}
\end{document}